\documentclass[final,5p,times,twocolumn]{elsarticle}

\usepackage{amssymb}
\usepackage{amsmath}
\usepackage{algorithm}
\usepackage{algpseudocode}

\usepackage{array}
\usepackage[caption=false,font=normalsize,labelfont=sf,textfont=sf]{subfig}
\usepackage{textcomp}
\usepackage{stfloats}
\usepackage{verbatim}
\usepackage{color}
\usepackage{multirow}
\usepackage{tabularx}
\usepackage{makecell}

\usepackage{amsthm}
\newtheorem{theorem}{Theorem}
\newtheorem{problem}{Problem}
\newtheorem{definition}{Definition}
\newtheorem{lemma}{Lemma}

\newtheorem{remark}{Remark}

\begin{document}

\begin{frontmatter}



\title{Data Requirements for Electric Grid Topology and Admittance Estimation}

\author{Norak Rin \corref{cor1}\fnref{label1,label4}}
\ead{Norak.rin@anu.edu.au}
\cortext[cor1]{Corresponding author.}
\author{Iman Shames\fnref{label2}}
\author{Ian R. Petersen\fnref{label1}}
\author{Elizabeth L. Ratnam\fnref{label4}}

\affiliation[label1]{organization={School of Engineering, The Australian National University, Australia},
            addressline={108 North Rd},
            city={Acton},
            postcode={2601},
            state={ACT},
            country={Australia}}

\affiliation[label2]{organization={Department of Electrical and Electronic Engineering, The University of Melbourne},
            addressline={Grattan Street},
            city={Parkville},
            postcode={3010},
            state={Victoria},
            country={Australia}}


\affiliation[label4]{organization={CSIRO},              addressline={10 Murray Dwyer Circuit},
        city={Mayfield West},
        postcode={2304},
        state={NSW},
        country={Australia}}

\begin{abstract}
Recent advances in precise phasor measurement units are enabling new approaches to estimate distribution and transmission grid parameters in real-time. In this paper, we investigate voltage and current phasor measurement requirements to estimate the electric grid topology and admittance parameters. We show necessary and sufficient conditions for the number of independent operating points (measurements) required to determine the topology and admittance of a completely unknown electric grid. With prior topology information, we also show that there is a minimum number of measurements required to uniquely determine the admittance matrix and corresponding grid topology. In the presence of noisy phasor measurements, we show that the admittance matrix can be estimated using a structured total least squares approach. By means of numerical simulations on the IEEE 13-node distribution feeder, the IEEE 14-node transmission network, and the IEEE 123-node distribution feeder, we demonstrate our approach is suitable for applications in radial and mesh grid topologies in the presence of measurement noise.
\end{abstract}



\begin{keyword}
Admittance estimation, Data requirements, Grid topology, Rank conditions, Structured total least squares.
\end{keyword}

\end{frontmatter}



\section{Introduction}
\label{sec1}

The rapid uptake of distributed and renewable energy technology across the globe is driving a transformation of the electric grid \cite{QAYS2026102111}. At the same time, precise phasor measurement units are being dispatched to measure and observe power systems behavior from the transmission-level \cite{5447627} to the distribution grid \cite{ von2017precision}. Accordingly, new approaches to real-time estimation of the grid topology, admittance parameters, and conductor loadings under new and evolving grid conditions are possible --- supporting power systems operators as the grid transforms to meet net-zero operations \cite{AReview}. 
 
Topology identification of both the transmission and distribution grid supports power system operators in times such as severe weather events (e.g. storms) when extensive damage to electrical infrastructure is possible. That is, grid operators are responsible for the restoration of the power grid in the aftermath of a storm --- and a key element of the restoration process is topology identification \cite{deka2023learning}. 
Combining topology identification with admittance estimation can enhance grid operation
under evolving conditions --- that is, grid controllers and impedance protection settings can be updated to reduce the risk of misoperation \cite{MA2025101884}.

Several authors have proposed approaches to estimate the electric grid topology and grid admittance parameters for radial grid topologies commonly used in a distribution grid. 
For instance, Moffat et al. in \cite{8918302} show that, with phasor measurements unavailable at every node of the distribution grid, admittance estimation is generally limited to effective impedance between measured nodes. However, for radial networks, the topology and admittance can be uniquely recovered. The authors in \cite{8918302} introduce a subKron representation and Complex Recursive Grouping algorithm to improve estimation in the presence of noise. Similarly, Yuan et al. in \cite{9858017} show that the admittance matrix can be uniquely recovered from phasor measurements when all nodes are observed, and that only a Kron-reduced matrix is identifiable with hidden nodes. For radial networks, they establish theoretical conditions enabling exact recovery of the admittance matrix from its Kron reduction. Ardakanian et al. in \cite{On} employ phasor measurements to estimate the grid topology and admittance parameters by formulating a convex optimization problem using sparsity-based regularization. Ardakanian et al. in \cite{On} also propose algorithms to mitigate low-rank measurement issues and enable timely detection of grid topology changes.

Many authors have also proposed system identification approaches to estimate the electric grid topology and admittance parameters using grid measurements, with methods suitable for both meshed and radial grid topologies.
For instance, Brouillon et al. in~\cite{9683503} propose a Bayesian method to estimate the topology of an unbalanced power distribution grid. Baggio et al. in~\cite{baggio2021data} leverage rank conditions on input-output data to control complex networks. Similarly, De Persis et al. in~\cite{de2019formulas} derive a parametrization of linear feedback systems to characterize the input–output behavior of a linear system. 
These system identification approaches in~\cite{ baggio2021data} and~\cite{de2019formulas} do not require prior grid topology information in their estimation.


In contrast to system identification approaches~\cite{ baggio2021data} and~\cite{de2019formulas}, numerous regression-based methods have been proposed to estimate the electric grid topology and admittance parameters by leveraging knowledge of the grid topology. For example, Lang et al. in \cite{Structure} propose an admittance parameter estimation problem using a structured total least squares approach that presumes prior knowledge of the grid topology. Similarly, Mishra and de Callafon in \cite{9930858} assume prior grid topology knowledge in their proposed least squares approach to estimate three-phase line admittance parameters in the grid. In contrast, Zhang et al. in \cite{9459535} does not incorporate prior topology information in their proposed linear regression approach to estimate the admittance parameters of a distribution grid. Instead, the approach by Zhang et al. in \cite{9459535} estimates the admittance matrix by mapping nodal voltage magnitudes and active/reactive power injections directly to the admittance matrix. The proposed methods in \cite{Structure, 9930858, 9459535} consider error minimization in the presence of noisy measurements, and provide algorithmic convergence guarantees. However, with the exception of \cite{deka2023learning} and \cite{9930858}, none of the aforementioned approaches provide a quantification of the number of independent operating points (i.e., measurements) required to estimate the grid topology and admittance parameters. In fact, although \cite{deka2023learning} and \cite{9930858} discuss the required number of measurements, neither provides a proof of the necessary and sufficient conditions to substantiate their claims.

In this paper, we present a fundamental new result giving necessary and sufficient conditions for the number of independent operating points (i.e., measurements) required to estimate the topology and admittance parameters of a completely unknown electric grid. A single measurement (i.e., an operating point) corresponds to the collection of time-synchronised voltage and current phasor measurements at all grid nodes. That is, we assume a phasor measurement unit (PMU) is located at each node and measures the state variables; i.e., the voltage and current phasors for each phase. To estimate the grid topology and admittance parameters, we first construct an admittance matrix. We then use Kirchhoff's laws to derive a set of equations involving the admittance matrix, and we examine the invertibility of the equations to determine the admittance parameters. Next, we show that there is a minimum number of measurements required to uniquely estimate the grid topology and admittance parameters with and without prior topology information. The prior topology information includes knowledge that the grid has a tree topology, or that a single edge is non-existent in the graph defining the grid. 

We also consider the case of noisy synchrophasor measurements where, we formulate a structured total least squares problem to estimate the grid topology and admittance parameters. Our results are applicable to transmission, distribution, balanced or unbalanced, radial or meshed topologies, providing an analytical foundation for quantifying the measurement requirements in topology and admittance estimation. 

This paper extends our conference paper \cite{11107735} in a number of ways. Importantly, all theorems, proofs, algorithms, and experiments are new. More specifically, we provide a fundamental limit on the information required to uniquely determine a completely unknown network. The result is applicable regardless of the network size. Next, we show that there is a minimum number of measurements required for topology and admittance estimation when there is certain prior topology information. We provide the proofs for all our theorems and lemmas, supporting the reproducibility of our contributions. Then, we propose new methods for topology and admittance estimation in the presence of noisy phasor measurements, considering a structured total least squares problem. We also include a new algorithm that outlines the sequential procedure for estimating topology and admittance under measurement noise. Also, our numerical simulations are more realistic and extensive and consider the case of noisy phasor data using the IEEE 13-node feeder, and the IEEE 14-node network. Furthermore, we consider the IEEE 123-node feeder with prior topology information to demonstrate the scalability of our approach to larger networks.

To implement our approach to electric grid topology and admittance estimation, we take the following steps. First, we ensure all grid nodes are equipped with a PMU. Second, we collect a minimum number of voltage and current phasor measurements at each node in the electric grid according to whether prior topology information is available. Next, if there is no (or limited) prior topology information, we assume all the nodes are connected, that is, we consider a complete graph (or a complete graph minus one edge). Then, we seek to obtain a unique solution to a system of equations as derived from Kirchhoff’s laws for the graph. Once the unique solution is obtained, the admittance parameters for each edge in the graph are known. Accordingly, edges with non-zero admittance parameters represent the true electric grid topology.

In the case of noisy synchrophasor measurements, we extend our approach after collecting PMU measurements. Specifically, we solve a structured total least squares problem to obtain admittance parameters for each edge of the complete graph. Then, we seek to identify which edges of the complete graph do not exist in the true graph via a thresholding process.


The organization of this paper is as follows. In Section~{\ref{sect_preliminaries}}, we introduce some graph preliminaries and the system of equations for an electric grid based on Kirchhoff's laws. In Section~{\ref{Section3}}, we derive the minimum number of phasor measurements required to estimate the electric grid topology and admittance parameters both with and without prior topology information. In Section~{\ref{sectionV}}, we consider noisy phasor measurements and construct a structured total least squares problem. In Section~\ref{sect_simulations}, we present numerical simulations for both distribution and transmission networks, which is followed by Conclusions in Section~\ref{sect_conclusion}.

\subsection*{Notation}

Let $\mathbb{R}^n$ and $\mathbb{C}^n$ denote the sets of $n$-dimensional vectors of real and complex numbers, respectively. Let $\mathbb{S}^n$ denote the set of $n\times n$ complex symmetric matrices. For a matrix $\textbf{\textit{X}}\in\mathbb{R}^{m\times n}$ (or $\mathbb{C}^{m\times n}$) with $m$-by-$n$ dimensions, $X_{i,j}$ denotes the element corresponding to row $i$ and column $j$, and $\textbf{\textit{X}}^\top\in\mathbb{R}^{n\times m}$ (or $\textbf{\textit{X}}^\dagger\in$ $\mathbb{C}^{m\times n}$) denotes its transpose (or Hermitian transpose).
Let $\mathbf{1}_{n}$ and $\mathbf{0}_{n}\in\mathbb{R}^{n}$ denote $n$-dimensional vectors which have all elements equal to $1$ and $0$. Let $\textbf{\textit{D}}=\mathrm{diag}\left(a_1,a_2,\dotsc,a_n\right)$ $\in\mathbb{R}^{n\times n}$ (or $\mathbb{C}^{n\times n}$) denote a diagonal matrix where
\begin{equation}\nonumber
 D_{i,j} = 
    \begin{cases}
      a_i & \text{if $i = j \in\{1,2,\dotsc,n\}$},\\
      0 & \text{if $i \neq j$.}
    \end{cases}       
\end{equation} 
Throughout, boldface represents vector or matrix quantities whereas scalar quantities are non-bold.





\section{Preliminaries}\label{sect_preliminaries}

We now introduce notation and concepts that underpin our approach to estimate the topology and admittance parameters of an electric grid. Our approach assumes no prior information, is underpinned by graph theory, and can be applied to any kind of power grid topology (e.g., radial, meshed, unbalanced, etc).

Consider a \emph{complete graph} $\mathcal{G}(\mathcal{N,E})$ that consists of the node set $\mathcal{N}= \{1,2,\dotsc,n\}$ and  the edge set $\mathcal{E}= \{(i,j)| i\in\mathcal{N}\; ,j\in\mathcal{N},\; i<j\}$, where edge $(i,j)\in \mathcal{E}$ connects node $i$ and node $j$. Note that we use the ordered pair convention for ease of presentation and the underlying graph is assumed to be undirected. The graph $\mathcal{G}(\mathcal{N,E})$ is also assumed to be a \emph{weighted graph} where each edge $(i,j)\in \mathcal{E}$ is assigned a weight $\alpha_{i,j}$. Let $e$ denote the cardinality of the set $\mathcal{E}$, such that $|\mathcal{E}| = e$. Then, by definition of a complete graph for $\mathcal{G}(\mathcal{N},\mathcal{E})$, $e = n(n-1)/2$. Let $\textbf{\textit{H}} \in \mathbb{R}^{n\times e}$ denote the \emph{incidence matrix} of the undirected graph $\mathcal{G}(\mathcal{N},\mathcal{E})$. The $il$-th element of $\textbf{\textit{H}}$, denoted by $H_{i,l}$, with $i \in \mathcal{N}$ and $l \in \mathcal{E}$, is defined by 
 \begin{equation}\label{equation1}
     H_{i,l} := 
    \begin{cases}
      1 & \text{if edge $l$ leaves node $i$},\\
      -1 & \text{if edge $l$ enters node $i$},\\
      0 & \text{otherwise}.
    \end{cases}
 \end{equation}
 
In what follows, we will use the graph $\mathcal{G}(\mathcal{N},\mathcal{E})$ to represent any complete network of admittance. Fig.~\ref{Fig1} illustrates an example of a complete graph $\mathcal{G}(\mathcal{N,E})$, where $n = 3$. Here, the weight of each edge $(i,j)\in \mathcal{E}$ corresponds to an \emph{admittance parameter}, $y_{i,j} \in \mathbb{C}$; e.g., the admittance of a transmission line or a transformer. The \emph{voltage} at node $j\in \mathcal{N}$ is denoted by $V_j \in\mathbb{C}$, and it is measured with respect to the ground. Each current source injects \emph{current} $I_j\in \mathbb{C}$ into node $j\in\mathcal{N}$.
\begin{figure}[!t]
    \centering
\includegraphics[width=0.70\linewidth]{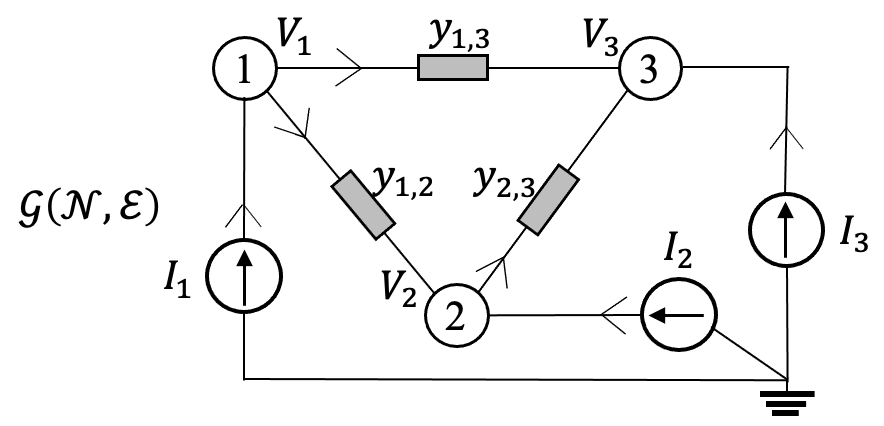}
    \caption{An example graph $\mathcal{G}(\mathcal{N,E})$, where $n=3$. Admittance parameters are represented by $y_{1,3}, y_{1,2}, y_{2,3}$. } \label{Fig1}
\end{figure}

To develop the relationship between $V_j$ and $I_j$, where $j\in \mathcal{N}$, consider the graph $\mathcal{G}(\mathcal{N},\mathcal{E})$. For all nodes in $\mathcal{G}(\mathcal{N},\mathcal{E})$, the \emph{voltage vector} is denoted by $\textbf{\textit{V}} = [V_1,V_2,\dotsc, V_n]^\top \in \mathbb{C}^{n}$ and the \emph{current vector} is denoted by $\textbf{\textit{I}} = [I_1,I_2,\dotsc,I_n]^\top \in \mathbb{C}^{n}$. Accordingly, the \emph{admittance matrix} is denoted by $\textbf{\textit{Y}} \in \mathbb{S}^{n}$, where each element of $\textbf{\textit{Y}}$ is defined by
 \begin{equation*}
     Y_{i,j} := 
    \begin{cases}
      \sum\limits_{\substack{k = 1, \\ k\neq i}}^{n} {y_{i,k}} & \text{if $i=j$},\\
      -y_{i,j} & \text{if $i\neq j$}.
    \end{cases}
 \end{equation*}
According to \cite[Chapter 10]{Basic}, the relationship between $\textbf{\textit{V}}$ and $\textbf{\textit{I}}$, as derived from Kirchhoff's laws, is
 \begin{equation}\label{eq1}
 \textbf{\textit{I}} = \textbf{\textit{Y}}\textbf{\textit{V}},
 \end{equation}
 where $\textbf{\textit{Y}}\mathbf{1}_n = \mathbf{0}_n$. 

 In~\eqref{eq1}, we considered a single measurement of the voltage vector \textbf{\textit{V}} and the current vector \textbf{\textit{I}}. Next, we consider a series of voltage and current measurements \textbf{\textit{V}} and \textbf{\textit{I}} in order to uniquely identify the grid topology and admittance parameters. Suppose that different measurements of \textbf{\textit{V}} and \textbf{\textit{I}} are indexed by $k \in \mathcal{T}=\{1,2,\dotsc,\tau\}$, $(\tau\leq n)$. That is, $\textbf{\textit{V}}^{(k)} \in \mathbb{C}^n$ denotes the voltage vector measured corresponding to $k\in \mathcal{T}$ and $\textbf{\textit{I}}^{(k)} \in \mathbb{C}^n$ denotes the current vector measured corresponding to $k\in \mathcal{T}$. In the sequel, we assume that the vector pairs $(\textbf{\textit{I}}^{(k)}, \textbf{\textit{V}}^{(k)})$ are known exactly for all $k\in\mathcal{T}$ and that the measurements are such that each voltage vector $\textbf{\textit{V}}^{(1)},\textbf{\textit{V}}^{(2)},\dotsc,\textbf{\textit{V}}^{(\tau)}$ and each current vector
$\textbf{\textit{I}}^{(1)},\textbf{\textit{I}}^{(2)},\dotsc,\textbf{\textit{I}}^{(\tau)}$ are distinct from one another, that is, $V^{(k)}_j \neq V^{(\Tilde{k})}_j$ and $I^{(k)}_j \neq I^{(\Tilde{k})}_j$, where $j\in\mathcal{N}$, $k\in\mathcal{T}$, $\Tilde{k}\in\mathcal{T}$, and $k \neq \Tilde{k}$. Considering measurements corresponding to all $k \in \mathcal{T}$, the \emph{voltage matrix} $\textbf{\textit{u}}\in\mathbb{C}^{n\times\tau}$ and the \emph{current matrix} $\textbf{\textit{i}}\in\mathbb{C}^{n\times\tau}$ are formed as     $\textbf{\textit{u}}=\begin{bmatrix} \textbf{\textit{V}}^{(1)}, \textbf{\textit{V}}^{(2)}, \dotsc, \textbf{\textit{V}}^{(\tau)}\end{bmatrix}$ and $\textbf{\textit{i}}=\begin{bmatrix} \textbf{\textit{I}}^{(1)}, \textbf{\textit{I}}^{(2)}, \dotsc, \textbf{\textit{I}}^{(\tau)}\end{bmatrix}$, respectively. We rewrite~\eqref{eq1} as
\begin{equation}\label{eq2} \textbf{\textit{i}}=\textbf{\textit{Y}}\textbf{\textit{u}}.
\end{equation}

\section{Problem Formulation}\label{Section3}

In this section, we show that there is a minimum number of measurements required to uniquely estimate the grid topology and admittance parameters. In Section~\ref{Section3A}, we consider the case where we have no prior information of the electric grid. In Section~\ref{Section3B}, we consider the case where we have prior topology information of the electric grid; i.e., we consider a general graph. We show the minimum number of measurements required when the general graph reduces to a rooted tree. Then in Section~\ref{Section3C}, we consider a special case where we have limited prior topology information, i.e., a single edge is non-existent in the electric grid topology. Throughout, we assume full nodal observability, i.e., synchronized phasor measurements are available at all grid nodes. 

\subsection{Admittance matrix estimation with a complete graph}\label{Section3A}

We consider the case where we have no prior information of the electric grid; i.e., we consider the complete graph $\mathcal{G}(\mathcal{N,E})$. We assume that sinusoidal currents are applied at the same frequency and $\tau$ synchrophasor measurements (both current and voltage phasors) are collected at each node via PMUs.

\begin{problem}\label{p1}
   Consider an admittance network as defined by $\mathcal{G}(\mathcal{N},\mathcal{E})$ with an unknown admittance matrix $\textbf{\textit{Y}} \in \mathbb{C}^{n\times n}$. Suppose that $\tau$ measurements are carried out where for each measurement
   index $k\in \mathcal{T}$, the current vector applied to the network is defined by $\textbf{\textit{I}}^{(k)} \in \mathbb{C}^{n}$ and the corresponding voltage vector $\textbf{\textit{V}}^{(k)} \in \mathbb{C}^{n}$ is measured. We seek to find the minimum number of measurements
   $\tau$ required to uniquely determine the unknown admittance matrix $\textbf{\textit{Y}}$. \end{problem}

In electric grid analysis, particularly distribution grid analysis, the first node is typically a substation node which is represented by a slack bus. The voltage of the slack bus is usually assumed $V_1 = 1~p.u.$, or otherwise, $V_1$ is known. We remove the slack bus node from our analysis, and redefine the voltage and current matrices as follows. The \emph{reduced voltage matrix} $\bar{\textbf{\textit{u}}}\in \mathbb{C}^{(n-1)\times\tau}$ and the \emph{reduced current matrix} $\bar{\textbf{\textit{i}}}\in \mathbb{C}^{(n-1)\times\tau}$, each containing one fewer row than $\textbf{\textit{u}}$ and $\textbf{\textit{i}}$, respectively, as in~\eqref{eq2}. Correspondingly, the \emph{reduced admittance matrix} $\bar{\textbf{\textit{Y}}}\in \mathbb{C}^{(n-1)\times(n-1)}$ has one fewer row and column than the original admittance matrix $\textbf{\textit{Y}}$. The relationship expressed in~\eqref{eq2} can therefore be reformulated as
\begin{equation}\label{eq3}
    \bar{\textbf{\textit{i}}} = \bar{\textbf{\textit{Y}}}\bar{\textbf{\textit{v}}},
\end{equation} where $\bar{\textbf{\textit{v}}} = \bar{\textbf{\textit{u}}}-V_1\mathbf{1}_{(n-1)\times\tau}\in\mathbb{C}^{(n-1)\times\tau}$; see also~\cite{deka2023learning}.

\begin{remark}\label{r1}
Once the reduced admittance matrix $\bar{\textbf{\textit{Y}}}$ is identified, the original admittance matrix $\textbf{\textit{Y}}$ is determined via $\textbf{\textit{Y}} = \begin{bmatrix}
    Y_{1,1} & \textbf{\textit{y}}_1^\top \\ \textbf{\textit{y}}_1 & \bar{\textbf{\textit{Y}}}
\end{bmatrix}$, where $\textbf{\textit{y}}_1 = \bar{\textbf{\textit{Y}}}\mathbf{1}_{n-1}$. 
\end{remark}

In the following theorem, we provide the solution to Problem~\ref{p1}. That is, we specify the necessary and sufficient conditions for the minimum number of measurements $\tau$ required to uniquely determine the reduced admittance matrix $\bar{\textbf{\textit{Y}}}$.

\begin{theorem}\label{t1}
The reduced admittance matrix $\bar{\textbf{\textit{Y}}}$ can be uniquely determined if and only if the number of measurements $\tau \geq n-1$ and $\bar{\textbf{\textit{v}}}\in\mathbb{C}^{(n-1)\times\tau}$ is of full rank.
\end{theorem}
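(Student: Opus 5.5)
Throughout, I read ``uniquely determined'' as: $\bar{\textbf{\textit{Y}}}$ is the only matrix in $\mathbb{C}^{(n-1)\times(n-1)}$ satisfying~\eqref{eq3} for the given data $(\bar{\textbf{\textit{i}}},\bar{\textbf{\textit{v}}})$. With that reading, the whole argument is a kernel computation for a linear map.

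Fix the data and consider the linear map $\mathcal{L}:\mathbb{C}^{(n-1)\times(n-1)}\to\mathbb{C}^{(n-1)\times\tau}$, $\mathcal{L}(\textbf{\textit{X}})=\textbf{\textit{X}}\bar{\textbf{\textit{v}}}$. The true $\bar{\textbf{\textit{Y}}}$ solves $\mathcal{L}(\textbf{\textit{X}})=\bar{\textbf{\textit{i}}}$. The solution set is therefore the affine space $\bar{\textbf{\textit{Y}}}+\ker\mathcal{L}$. So uniqueness holds exactly when $\ker\mathcal{L}=\{0\}$, that is, when no nonzero $\textbf{\textit{X}}$ satisfies $\textbf{\textit{X}}\bar{\textbf{\textit{v}}}=0$.

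Next, rewrite the kernel condition row by row. $\textbf{\textit{X}}\bar{\textbf{\textit{v}}}=0$ means every row $\textbf{\textit{x}}^\top$ of $\textbf{\textit{X}}$ satisfies $\textbf{\textit{x}}^\top\bar{\textbf{\textit{v}}}=0$, i.e.\ lies in the left null space of $\bar{\textbf{\textit{v}}}$. This left null space is $\{0\}$ if and only if the $n-1$ rows of $\bar{\textbf{\textit{v}}}$ are linearly independent, i.e.\ $\operatorname{rank}\bar{\textbf{\textit{v}}}=n-1$.

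\textbf{Sufficiency.} Assume $\tau\ge n-1$ and $\bar{\textbf{\textit{v}}}$ has full rank. Then full rank means rank $n-1$, so $\bar{\textbf{\textit{v}}}\bar{\textbf{\textit{v}}}^\dagger$ is invertible. This gives the explicit formula $\bar{\textbf{\textit{Y}}}=\bar{\textbf{\textit{i}}}\,\bar{\textbf{\textit{v}}}^\dagger(\bar{\textbf{\textit{v}}}\bar{\textbf{\textit{v}}}^\dagger)^{-1}$, which makes uniqueness concrete. Remark~\ref{r1} then recovers the full $\textbf{\textit{Y}}$.

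\textbf{Necessity.} If $\tau<n-1$, then $\operatorname{rank}\bar{\textbf{\textit{v}}}\le\tau<n-1$. If $\tau\ge n-1$ but $\bar{\textbf{\textit{v}}}$ is rank deficient, the rank is again below $n-1$. In either case there is a nonzero $\textbf{\textit{w}}$ with $\textbf{\textit{w}}^\top\bar{\textbf{\textit{v}}}=0$.

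The main obstacle is the competitor construction, because the true $\bar{\textbf{\textit{Y}}}$ is complex symmetric. One might object that uniqueness should be asked only within $\mathbb{S}^{n-1}$, and then a kernel element $\textbf{\textit{X}}$ must also be symmetric. I would handle this by taking $\textbf{\textit{X}}=\textbf{\textit{w}}\textbf{\textit{w}}^\top$. This matrix is symmetric and nonzero, and $\textbf{\textit{X}}\bar{\textbf{\textit{v}}}=\textbf{\textit{w}}(\textbf{\textit{w}}^\top\bar{\textbf{\textit{v}}})=0$. Hence $\bar{\textbf{\textit{Y}}}+\textbf{\textit{w}}\textbf{\textit{w}}^\top$ is a second symmetric solution of~\eqref{eq3}. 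So uniqueness fails even when $\bar{\textbf{\textit{Y}}}$ is restricted to symmetric matrices, and the ``only if'' direction holds under either reading.

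Finally, once $\bar{\textbf{\textit{Y}}}$ is fixed, the full $\textbf{\textit{Y}}$ built via Remark~\ref{r1} is fixed too. Its off-diagonal entries give the $y_{i,j}$ directly, so the nonzero ones reveal the topology.
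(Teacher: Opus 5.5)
Your proposal is correct and follows essentially the paper's argument. Sufficiency comes from the explicit formula $\bar{\textbf{\textit{Y}}}=\bar{\textbf{\textit{i}}}\bar{\textbf{\textit{v}}}^\dagger(\bar{\textbf{\textit{v}}}\bar{\textbf{\textit{v}}}^\dagger)^{-1}$, and necessity from the symmetric rank-one perturbation $\textbf{\textit{w}}\textbf{\textit{w}}^\top$ with $\textbf{\textit{w}}^\top\bar{\textbf{\textit{v}}}=0$ (the paper subtracts it, you add it); your kernel framing of $\textbf{\textit{X}}\mapsto\textbf{\textit{X}}\bar{\textbf{\textit{v}}}$ and your check that the competitor stays in $\mathbb{S}^{n-1}$ make explicit what the paper leaves implicit.
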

\renewcommand\qedsymbol{$\blacksquare$}

\begin{proof}
    If $\tau = n-1$ and $\bar{\textbf{\textit{v}}}\in\mathbb{C}^{(n-1)\times\tau}$ is invertible, then $\bar{\textbf{\textit{Y}}}$ can be uniquely determined from~\eqref{eq3} via the formula $$\bar{\textbf{\textit{Y}}} = \bar{\textbf{\textit{i}}}\bar{\textbf{\textit{v}}}^{-1}.$$ If $\tau > n-1$ and $\bar{\textbf{\textit{v}}}\in\mathbb{C}^{(n-1)\times\tau}$ is of full rank, then $\bar{\textbf{\textit{Y}}}$ can be uniquely determined from~\eqref{eq3} via the formula $$\bar{\textbf{\textit{Y}}} = \bar{\textbf{\textit{i}}}\bar{\textbf{\textit{v}}}^\dagger {(\bar{\textbf{\textit{v}}}\bar{\textbf{\textit{v}}}^\dagger)}^{-1}.$$ We have now proved the sufficient part of the theorem.
    If $\tau< n-1$ or $\tau\geq n-1$ and $\bar{\textbf{\textit{v}}}\in\mathbb{C}^{(n-1)\times\tau}$ is not of full rank, then there exists a non-zero vector $\textbf{\textit{w}}\in\mathbb{C}^{n-1}$ such that $\textbf{\textit{w}}^\top \bar{\textbf{\textit{v}}} = 0.$ Let \begin{equation}\label{eq4}\Tilde{\textbf{\textit{Y}}} = \bar{\textbf{\textit{Y}}} - \textbf{\textit{w}}\textbf{\textit{w}}^\top.
    \end{equation} Right-multiplying both sides of~\eqref{eq4} by $\bar{\textbf{\textit{v}}}$, we have \begin{equation}\label{eq5}
    \Tilde{\textbf{\textit{Y}}}\bar{\textbf{\textit{v}}} = \bar{\textbf{\textit{Y}}}\bar{\textbf{\textit{v}}} - \textbf{\textit{w}}\textbf{\textit{w}}^\top \bar{\textbf{\textit{v}}}. 
    \end{equation}
    Since $\bar{\textbf{\textit{i}}} = \bar{\textbf{\textit{Y}}}\bar{\textbf{\textit{v}}}$ and $\textbf{\textit{w}}^\top \bar{\textbf{\textit{v}}} = 0$, equation~\eqref{eq5} becomes 
    \begin{equation}\label{eq6}
        \Tilde{\textbf{\textit{Y}}}\bar{\textbf{\textit{v}}} = \bar{\textbf{\textit{i}}}.
    \end{equation}Thus, $\bar{\textbf{\textit{Y}}}$ is not uniquely determined from~\eqref{eq3}. This establishes the necessary part of the theorem. 
\end{proof}

Note that $\tilde{\textbf{\textit{Y}}}$ in Equation~\eqref{eq4} of Theorem~\ref{t1} is introduced to establish necessary condition under which the reduced admittance matrix $\bar{\textbf{\textit{Y}}}$ can be uniquely determined. It is straightforward to show that the reduced admittance matrix  $\tilde{\textbf{\textit{Y}}}$ corresponds to a physical circuit.

\begin{remark}\label{Remark2}
    Theorem~\ref{t1} establishes a theoretical baseline by proving both the necessary and sufficient conditions for the number of measurements required to estimate topology and admittance parameters of a completely unknown electric grid. Theorem~\ref{t1} requires a phasor measurement unit (PMU) located at each node and measures synchronized voltage and current phasors. That is, Theorem~\ref{t1} assumes full nodal observability, requiring independent measurements at all grid nodes.
\end{remark}

\subsection{Admittance matrix estimation with a general graph}\label{Section3B}
We consider the case where we have prior topology information of the electric grid; i.e., we consider a general graph denoted by $\widehat{\mathcal{G}}(\mathcal{N},\widehat{\mathcal{E}})$, where $\widehat{\mathcal{E}}\subseteq\mathcal{E}$. The cardinality of $\widehat{\mathcal{E}}$ is denoted by $\widehat{e}$. We assume sinusoidal currents are applied at the same frequency and $\tau$ synchrophasor measurements (current and voltage phasors) are collected at each node via PMUs.

Next, we introduce additional preliminary notation. We rewrite the relationship between \textbf{\textit{V}} and \textbf{\textit{I}} by incorporating the incidence matrix \textbf{\textit{H}}. Corresponding to the complex symmetric matrix \textbf{\textit{Y}}, we define a complex \emph{admittance vector} by $${\textbf{\textit{y}}} := [-Y_{2,1}, -Y_{3,1}, \dotsc, -Y_{n,1}, -Y_{3,2}, \dotsc, -Y_{n,n-1}]^\top \in \mathbb{C}^{\widehat{e}}.$$ Note that in the definition of \textbf{\textit{y}}, we have used the fact that $\textbf{\textit{Y}}\mathbf{1}_n = \mathbf{0}_n$ to remove the redundant diagonal elements of \textbf{\textit{Y}}; see also \cite{9858017}. Using this notation, we can rewrite~\eqref{eq1} as
  \begin{equation}\label{eq7}
      \textbf{\textit{I}} = \textbf{\textit{H}} \mathrm{diag}(\textbf{\textit{H}}^\top \textbf{\textit{V}})\textbf{\textit{y}}.
  \end{equation}
  We denote by $\Tilde{\textbf{\textit{A}}}(\textbf{\textit{V}}) = \textbf{\textit{H}} \mathrm{diag}(\textbf{\textit{H}}^\top \textbf{\textit{V}})\in \mathbb{C}^{n\times \widehat{e}}$, 
  the \emph{voltage coefficient matrix}. Then,~\eqref{eq7} can be equivalently written as
 \begin{equation}\label{eq8}
 \textbf{\textit{I}} = \Tilde{\textbf{\textit{A}}}(\textbf{\textit{V}})\textbf{\textit{y}}.
 \end{equation}
For a series of voltage and current measurements $k \in \mathcal{T}=\{1,2,\dotsc,\tau\}$, $(\tau\leq n)$, we can rewrite~\eqref{eq8} as \begin{equation}\label{eq9}
     \textbf{\textit{I}}^{(k)} = \Tilde{\textbf{\textit{A}}}(\textbf{\textit{V}}^{(k)})\textbf{\textit{y}}, \forall k \in \mathcal{T},
 \end{equation}
 where $\Tilde{\textbf{\textit{A}}}(\textbf{\textit{V}}^{(k)})$ denotes the voltage coefficient matrix constructed from the voltage vector measured corresponding to $k\in \mathcal{T}$. Considering measurements corresponding to all $k \in \mathcal{T}$, the voltage and current vectors are stacked as follows: \begin{equation*}
     \textbf{\textit{v}}=\begin{bmatrix} \textbf{\textit{V}}^{(1)}\\ \textbf{\textit{V}}^{(2)}\\\vdots\\ \textbf{\textit{V}}^{(\tau)}\end{bmatrix} \in \mathbb{C}^{n\tau} \text{ and } \Tilde{\textbf{\textit{i}}}=\begin{bmatrix} \textbf{\textit{I}}^{(1)}\\\textbf{\textit{I}}^{(2)}\\\vdots\\\textbf{\textit{I}}^{(\tau)}\end{bmatrix} \in \mathbb{C}^{n\tau}.
     \end{equation*} We then rewrite~\eqref{eq9} as
\begin{equation}\label{eq10} \Tilde{\textbf{\textit{i}}}=\textbf{\textit{A}}(\textbf{\textit{v}})\textbf{\textit{y}},
 \end{equation}
where \begin{equation*}
 \textbf{\textit{A}}(\textbf{\textit{v}}) = \begin{bmatrix}
    \Tilde{\textbf{\textit{A}}}(\textbf{\textit{V}}^{(1)})\\\Tilde{\textbf{\textit{A}}}(\textbf{\textit{V}}^{(2)})\\\vdots\\\Tilde{\textbf{\textit{A}}}(\textbf{\textit{V}}^{(\tau)})\end{bmatrix} \in \mathbb{C}^{n\tau \times \widehat{e}}.
 \end{equation*} The $\textbf{\textit{A}}(\textbf{\textit{v}})$ matrix is also referred to as the \emph{voltage coefficient matrix}. That is, both $\textbf{\textit{A}}(\textbf{\textit{v}})$ and $\Tilde{\textbf{\textit{A}}}(\textbf{\textit{V}})$ contain the incidence matrix, $\textbf{\textit{H}}$, and the voltage vector, $\textbf{\textit{V}}$.

A \emph{nodal voltage vector} for node $i\in\mathcal{N}$ is denoted by $\overline{\textbf{\textit{V}}}_i$ and defined by $\overline{\textbf{\textit{V}}}_i = [V_{i}^{(1)},V_{i}^{(2)},\dotsc, V_{i}^{(\tau)}]^\top \in \mathbb{C}^{\tau}$, where $\{1,2,\dotsc,\tau\} \in \mathcal{T}$. Likewise, A \emph{nodal current vector} for node $i\in\mathcal{N}$ is denoted by $\overline{\textbf{\textit{I}}}_i$ and defined by $\overline{\textbf{\textit{I}}}_i = [I_{i}^{(1)},I_{i}^{(2)},\dotsc, I_{i}^{(\tau)}]^\top \in \mathbb{C}^{\tau}$, where $\{1,2,\dotsc,\tau\} \in \mathcal{T}$.

\begin{theorem}\label{t2}
A unique solution to~\eqref{eq10} exists if and only if the rank of the voltage coefficient matrix $\textbf{\textit{A}}(\textbf{\textit{v}})$ is equal to the number of unknown admittance parameters $\widehat{e}$.
\end{theorem}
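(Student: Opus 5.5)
The plan is to treat \eqref{eq10} as a linear system $\textbf{\textit{A}}(\textbf{\textit{v}})\textbf{\textit{y}} = \Tilde{\textbf{\textit{i}}}$ in the unknown $\textbf{\textit{y}}\in\mathbb{C}^{\widehat{e}}$. The measurements are generated by the true grid, so the system is consistent by construction: the true admittance vector $\textbf{\textit{y}}^\star$ satisfies \eqref{eq10}. Existence is therefore automatic, and the content of the statement is uniqueness. I will mirror the structure of the proof of Theorem~\ref{t1}: sufficiency via an explicit left inverse, and necessity via a kernel perturbation.

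\emph{Sufficiency.} Suppose $\mathrm{rank}\,\textbf{\textit{A}}(\textbf{\textit{v}}) = \widehat{e}$. Then $\textbf{\textit{A}}(\textbf{\textit{v}})\in\mathbb{C}^{n\tau\times\widehat{e}}$ has full column rank, so its Gram matrix $\textbf{\textit{A}}(\textbf{\textit{v}})^\dagger\textbf{\textit{A}}(\textbf{\textit{v}})\in\mathbb{C}^{\widehat{e}\times\widehat{e}}$ is Hermitian positive definite and hence invertible. To see this, note that $\textbf{\textit{x}}^\dagger\textbf{\textit{A}}^\dagger\textbf{\textit{A}}\textbf{\textit{x}}=\|\textbf{\textit{A}}\textbf{\textit{x}}\|^2>0$ for $\textbf{\textit{x}}\neq 0$. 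Left-multiplying \eqref{eq10} by $\textbf{\textit{A}}(\textbf{\textit{v}})^\dagger$ gives
$$\textbf{\textit{y}} = \bigl(\textbf{\textit{A}}(\textbf{\textit{v}})^\dagger\textbf{\textit{A}}(\textbf{\textit{v}})\bigr)^{-1}\textbf{\textit{A}}(\textbf{\textit{v}})^\dagger\Tilde{\textbf{\textit{i}}}.$$
Any solution must equal this vector, so the solution is unique. Equivalently, if $\textbf{\textit{y}}_1$ and $\textbf{\textit{y}}_2$ both solve \eqref{eq10}, then $\textbf{\textit{A}}(\textbf{\textit{v}})(\textbf{\textit{y}}_1-\textbf{\textit{y}}_2)=0$, and the trivial kernel forces $\textbf{\textit{y}}_1=\textbf{\textit{y}}_2$.

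\emph{Necessity.} Suppose $\mathrm{rank}\,\textbf{\textit{A}}(\textbf{\textit{v}}) < \widehat{e}$. By rank--nullity there is a nonzero $\textbf{\textit{w}}\in\mathbb{C}^{\widehat{e}}$ with $\textbf{\textit{A}}(\textbf{\textit{v}})\textbf{\textit{w}}=\mathbf{0}$. Set $\Tilde{\textbf{\textit{y}}}=\textbf{\textit{y}}^\star+\textbf{\textit{w}}$. Then $\textbf{\textit{A}}(\textbf{\textit{v}})\Tilde{\textbf{\textit{y}}}=\Tilde{\textbf{\textit{i}}}$, while $\Tilde{\textbf{\textit{y}}}\neq\textbf{\textit{y}}^\star$, so the solution is not unique. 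This is the exact analogue of the perturbation $\textbf{\textit{w}}\textbf{\textit{w}}^\top$ in \eqref{eq4}.

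The only delicate point is the phrase ``a unique solution exists''. Strictly, one must make sure that existence is not in question, otherwise ``if and only if'' could fail for an inconsistent right-hand side. I will state explicitly that $\Tilde{\textbf{\textit{i}}}$ lies in the range of $\textbf{\textit{A}}(\textbf{\textit{v}})$ because the data come from \eqref{eq9} with the true $\textbf{\textit{y}}^\star$, so existence holds and the equivalence concerns uniqueness only. It may also be worth remarking that the perturbed vector $\Tilde{\textbf{\textit{y}}}$ corresponds to a valid symmetric admittance matrix with zero row sums. This holds automatically, since every $\textbf{\textit{y}}\in\mathbb{C}^{\widehat{e}}$ defines one through $\textbf{\textit{H}}\,\mathrm{diag}(\textbf{\textit{y}})\textbf{\textit{H}}^\top$. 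Hence non-uniqueness is physically meaningful and not an artifact of the parametrization.
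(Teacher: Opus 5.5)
Your proof is correct and takes the same route as the paper. The paper simply cites the Rouch\'e--Capelli theorem, and your kernel and left-inverse argument is the uniqueness half of that theorem written out. You also note that $\Tilde{\textbf{\textit{i}}}$ lies in the range of $\textbf{\textit{A}}(\textbf{\textit{v}})$ because the noise-free data are generated by the true admittance vector. That supplies the consistency condition $\mathrm{rank}(\textbf{\textit{A}}(\textbf{\textit{v}}))=\mathrm{rank}([\textbf{\textit{A}}(\textbf{\textit{v}})\;\Tilde{\textbf{\textit{i}}}])$, which the paper leaves implicit.
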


\begin{proof}
    The proof follows directly from the Rouché-Capelli Theorem in \cite[Chapter 2]{Linear} regarding the existence of a unique solution to an overdetermined system of linear equations.
\end{proof}

\begin{remark}\label{remark1}
    The condition in Theorem~\ref{t2} holds for a general graph $\widehat{\mathcal{G}}$ representing any kind of network topology; i.e., a complete or incomplete graph. Note that in our prior work~\cite{11107735}, we consider only the case of a complete graph. It is worth mentioning that to construct the voltage coefficient matrix $\textbf{\textit{A}}(\textbf{\textit{v}})$ as in Theorem~\ref{t2}, we require full nodal observability, i.e., independent, synchronized, phasor measurements at all grid nodes.
\end{remark}

\begin{theorem}\label{theorem3}
    Assume voltage phasors $V_i$ and $V_j$ of two nodes connected by edge $(i,j)$ where $i,j \in \mathcal{N}$ are distinct from one another, that is, $V_i \neq V_j$. Then, for a rooted tree denoted by $\widetilde{\mathcal{G}}(\mathcal{N},\widetilde{\mathcal{E}}) \subseteq \widehat{\mathcal{G}}(\mathcal{N},\widehat{\mathcal{E}})$ where $\widetilde{\mathcal{E}}\subseteq \widehat{\mathcal{E}}$; e.g, a radial distribution feeder,
 the minimum number of measurements $\tau$ such that~\eqref{eq10} has a unique solution is $\tau = 1$.
\end{theorem}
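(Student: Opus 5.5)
For $\tau=1$ the system~\eqref{eq10} reduces to~\eqref{eq8}, $\textbf{\textit{I}}=\Tilde{\textbf{\textit{A}}}(\textbf{\textit{V}})\textbf{\textit{y}}$, where the unknown vector $\textbf{\textit{y}}$ now ranges only over the edges of the rooted tree $\widetilde{\mathcal{G}}(\mathcal{N},\widetilde{\mathcal{E}})$. Here $\textbf{\textit{H}}\in\mathbb{R}^{n\times(n-1)}$ denotes the tree incidence matrix, since a tree on $n$ nodes has exactly $n-1$ edges. By Theorem~\ref{t2}, uniqueness is equivalent to $\mathrm{rank}\,\Tilde{\textbf{\textit{A}}}(\textbf{\textit{V}})=n-1$. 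Minimality is trivial, because $\tau\geq 1$ is needed for any data at all. The whole proof therefore reduces to showing that $\textbf{\textit{H}}\,\mathrm{diag}(\textbf{\textit{H}}^\top\textbf{\textit{V}})$ has full column rank $n-1$.

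The plan is to factor $\Tilde{\textbf{\textit{A}}}(\textbf{\textit{V}})=\textbf{\textit{H}}\textbf{\textit{D}}$ with $\textbf{\textit{D}}=\mathrm{diag}(\textbf{\textit{H}}^\top\textbf{\textit{V}})$. The diagonal entry of $\textbf{\textit{D}}$ for edge $(i,j)$ is $\pm(V_i-V_j)$, which is nonzero by the hypothesis $V_i\neq V_j$ on every edge. Hence $\textbf{\textit{D}}$ is an invertible $(n-1)\times(n-1)$ matrix, and $\mathrm{rank}(\textbf{\textit{H}}\textbf{\textit{D}})=\mathrm{rank}(\textbf{\textit{H}})$.

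It remains to show that the incidence matrix of a tree has rank $n-1$, i.e.\ that its columns are linearly independent. The quickest argument is by leaf peeling. Suppose $\textbf{\textit{H}}\textbf{\textit{x}}=\mathbf{0}$. Any leaf node $i$ meets exactly one edge $l$, so row $i$ of this equation gives $x_l=0$. Delete that leaf and its edge; what remains is a tree on $n-1$ nodes, and we repeat the argument. By induction every $x_l=0$. Alternatively, one can use the standard fact that the incidence matrix of a connected graph has rank $n-1$, its kernel on the left being spanned by $\mathbf{1}_n$.

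Combining these steps gives $\mathrm{rank}\,\Tilde{\textbf{\textit{A}}}(\textbf{\textit{V}})=n-1=\widetilde{e}$, so Theorem~\ref{t2} yields a unique $\textbf{\textit{y}}$ from a single measurement. The only delicate point is bookkeeping: we must make sure that, with the tree prior, the unknowns are exactly the $n-1$ tree-edge admittances, so that $\widehat{e}$ in Theorem~\ref{t2} is replaced by $|\widetilde{\mathcal{E}}|=n-1$. We must also check that the nonzero-voltage-difference hypothesis is exactly what makes $\textbf{\textit{D}}$ invertible. No real obstacle is expected beyond these checks.
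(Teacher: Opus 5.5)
Your proposal is correct and follows the same route as the paper: for $\tau=1$, use Theorem~\ref{t2} to reduce uniqueness to $\mathrm{rank}(\textbf{\textit{A}}(\textbf{\textit{v}}))=\mathrm{rank}(\textbf{\textit{H}})=n-1=\widetilde{e}$ for the tree. You are more explicit than the paper, which cites a reference for $\mathrm{rank}(\textbf{\textit{H}})=n-1$ and leaves implicit that the hypothesis $V_i\neq V_j$ is what makes $\mathrm{diag}(\textbf{\textit{H}}^\top\textbf{\textit{V}})$ invertible; your factorization and leaf-peeling induction supply both steps.
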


\begin{proof}
    The cardinality of $\widetilde{\mathcal{E}}$ is denoted by $\widetilde{e}$. According to \cite{diestel2025graph}, the number of edges for the graph $\tilde{\mathcal{G}}$ is $\widetilde{e} = n - 1$. For the graph $\widetilde{\mathcal{G}}$, when $\tau = 1$, $\mathrm{rank}(\textbf{\textit{A}}(\textbf{\textit{v}})) = \mathrm{rank}(\textbf{\textit{H}}) = n-1$ \cite{1084251}. Thus, a unique solution to \eqref{eq10} exists since $\mathrm{rank}(\textbf{\textit{A}}(\textbf{\textit{v}})) =\widetilde{e}$. 
\end{proof}

\begin{remark}\label{Remark4}
    Theorem~\ref{theorem3} requires a phasor measurement unit (PMU) located at each node in a radial feeder that measures synchronized voltage and current phasors. That is, Theorem~\ref{theorem3} assumes full nodal observability, requiring independent, synchronized, phasor measurements at all grid nodes. In transmission grids, we anticipate that it is straightforward to obtain full nodal observability as modern protection equipment supports PMU measurements. However, in distribution grids with limited, sparse measurements, this is an important practical consideration.
\end{remark}

\subsection{Admittance matrix estimation with a complete graph minus one edge}\label{Section3C}

We now consider a special case where we have limited prior information of the electric grid topology, i.e., a single edge is non-existent in the graph corresponding to the electric grid topology. Specifically, the electric grid in this special case corresponds to a complete graph with one edge removed. We define a subgraph $\overline{\mathcal{G}}(\mathcal{N},\overline{\mathcal{E}}) \subset \mathcal{G}(\mathcal{N,E})$ by removing one edge from $\mathcal{E}$. Here, $|\overline{\mathcal{E}}| =\bar{e}= e-1$. Again, we assume that sinusoidal currents are applied at the same frequency and $\tau$ synchrophasor measurements (both current and voltage phasors) are collected at each node via PMUs.

\begin{problem}\label{p2}
   Consider an admittance network as defined by a graph $\overline{\mathcal{G}}(\mathcal{N},\overline{\mathcal{E}})$ with an unknown admittance matrix $\textbf{\textit{Y}}$. Suppose
   that $\tau$ phasor measurements are carried out where for each measurement index
   $k\in \mathcal{T}$ the current applied to the network is $\textbf{\textit{I}}^{(k)} \in \mathbb{C}^{n}$ and the corresponding voltage $\textbf{\textit{V}}^{(k)} \in \mathbb{C}^{n}$ is measured. We seek to find the minimum number of measurements $\tau$ required to uniquely determine the unknown admittance matrix $\textbf{\textit{Y}}$. 
   \end{problem}

 Solving Problem~\ref{p2} is equivalent to finding conditions on $\textbf{\textit{A}}(\textbf{\textit{v}})$ that produce a unique solution to~\eqref{eq10}. From Theorem~\ref{t2}, a unique solution exists if and only if $\mathrm{rank}(\textbf{\textit{A}}(\textbf{\textit{v}}))=\overline{e} = e-1$. For a complete graph, $e$ is equal to $n(n-1)/2$. Thus, we seek to find the minimum value of $\tau$ for which 
 \begin{equation}\label{eq11}
     \mathrm{rank}(\textbf{\textit{A}}(\textbf{\textit{v}})) = \frac{n(n-1)}{2} - 1.
 \end{equation}

Next, we introduce the concept of genericity in order to find the minimum number of measurements $\tau$ such that the electric grid topology and admittance parameters can be uniquely determined. See the Appendix for further details on genericity.

\begin{definition}{\cite[Definition 6]{gortler2014generic}}\label{d3}
     The nodal voltage vectors $[\overline{\textbf{\textit{V}}}_1, \overline{\textbf{\textit{V}}}_2,\dotsc, \overline{\textbf{\textit{V}}}_n]$ are \emph{generic} if they do not satisfy any non-zero algebraic equation with rational coefficients. 
\end{definition} 

\begin{remark}
If the nodal voltage vectors $[\overline{\textbf{\textit{V}}}_1, \overline{\textbf{\textit{V}}}_2,\dotsc, \overline{\textbf{\textit{V}}}_n]$ are linearly dependent, then they are not generic.
\end{remark}

In the following theorem, we derive a formula for the minimum number of measurements $\tau$ required to uniquely identify the electric grid topology and admittance parameters when one edge is non-existent in the complete graph. That is, we know that one edge does not exist in the grid topology. 

\begin{theorem}\label{t3} For graph $\mathcal{\overline{G}}(\mathcal{N},\mathcal{\overline{E}})$, where the number of nodes $n\geq4$, assume the nodal voltage vectors $[\overline{\textbf{\textit{V}}}_1, \overline{\textbf{\textit{V}}}_2,\dotsc, \overline{\textbf{\textit{V}}}_n]$ are generic. Assume the electric grid topology can be represented by $\mathcal{\overline{G}}(\mathcal{N},\mathcal{\overline{E}})$. Then the grid topology and admittance parameters can be uniquely determined if and only if $\tau \geq n-2$.
\end{theorem}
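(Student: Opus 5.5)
The plan is to turn the rank condition~\eqref{eq11} (Theorem~\ref{t2}) into a statement about the kernel of $\textbf{\textit{A}}(\textbf{\textit{v}})$, and to describe that kernel by a dimension count on symmetric matrices. For any $\textbf{\textit{z}}\in\mathbb{C}^{\bar e}$, let $\textbf{\textit{M}}(\textbf{\textit{z}})\in\mathbb{S}^n$ be the Laplacian-type matrix built from $\textbf{\textit{z}}$ exactly as $\textbf{\textit{Y}}$ is built from $\textbf{\textit{y}}$. By~\eqref{eq7}--\eqref{eq10}, $\textbf{\textit{A}}(\textbf{\textit{v}})\textbf{\textit{z}}=\mathbf{0}$ if and only if $\textbf{\textit{M}}(\textbf{\textit{z}})\textbf{\textit{u}}=\mathbf{0}$. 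The map $\textbf{\textit{z}}\mapsto\textbf{\textit{M}}(\textbf{\textit{z}})$ is a bijection onto the symmetric matrices $\textbf{\textit{M}}$ with $\textbf{\textit{M}}\mathbf{1}_n=\mathbf{0}_n$ and $M_{a,b}=0$, where $(a,b)$ is the removed edge. Hence~\eqref{eq10} has a unique solution if and only if the only symmetric $\textbf{\textit{M}}$ with
\[
\textbf{\textit{M}}\,[\mathbf{1}_n,\ \textbf{\textit{u}}]=\mathbf{0}, \qquad M_{a,b}=0
\]
is $\textbf{\textit{M}}=\mathbf{0}$.

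Let $\mathcal{W}=\mathrm{span}\{\mathbf{1}_n,\textbf{\textit{V}}^{(1)},\dotsc,\textbf{\textit{V}}^{(\tau)}\}$. For $\tau\le n-1$, genericity (in particular linear independence, as noted in the Remark following Definition~\ref{d3}) gives $\dim\mathcal{W}=\tau+1$. Let the columns of $\textbf{\textit{P}}\in\mathbb{C}^{n\times d}$, with $d=n-\tau-1$, span the complement $\{\textbf{\textit{x}}:\textbf{\textit{x}}^\top\textbf{\textit{w}}=0\ \forall\textbf{\textit{w}}\in\mathcal{W}\}$. Every symmetric $\textbf{\textit{M}}$ that annihilates $\mathcal{W}$ can be written uniquely as $\textbf{\textit{M}}=\textbf{\textit{P}}\textbf{\textit{S}}\textbf{\textit{P}}^\top$ with $\textbf{\textit{S}}\in\mathbb{S}^d$. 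This space has dimension $d(d+1)/2$, and the condition $M_{a,b}=0$ is a single linear equation on it.

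\emph{Necessity.} If $\tau\le n-3$, then $d\ge 2$. The solution space therefore has dimension at least $d(d+1)/2-1\ge 2>0$, so some nonzero $\textbf{\textit{M}}$ exists and uniqueness fails. This is also where $n\ge4$ enters, since it makes the case $d\ge2$ nonvacuous.

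\emph{Sufficiency, $\tau=n-2$.} Here $d=1$ and $\textbf{\textit{P}}=\textbf{\textit{p}}$ is a single vector, so $\textbf{\textit{M}}=s\,\textbf{\textit{p}}\textbf{\textit{p}}^\top$ and $M_{a,b}=s\,p_ap_b$. It therefore suffices to show $p_a p_b\neq0$, which forces $s=0$.

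\emph{Sufficiency, $\tau\ge n-1$.} In this case $\mathcal{W}=\mathbb{C}^n$, so $\textbf{\textit{M}}=\mathbf{0}$ immediately. More simply, adding measurements only appends rows to $\textbf{\textit{A}}(\textbf{\textit{v}})$, so the rank cannot drop below its value at $\tau=n-2$.

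The main obstacle is justifying $p_ap_b\ne0$ from genericity. I would take $\textbf{\textit{p}}$ to be the generalized cross product of the $n-1$ columns of $[\mathbf{1}_n,\textbf{\textit{u}}]$. Then $p_j$ is, up to sign, the $(n-1)\times(n-1)$ minor obtained by deleting row $j$, which is a polynomial with integer coefficients in the voltage entries. The vanishing of $p_ap_b$ is then a polynomial equation with rational coefficients, and genericity in the sense of Definition~\ref{d3} excludes it, provided the polynomial is not identically zero. To confirm that, I would exhibit one explicit choice of $\textbf{\textit{u}}$ for which both minors are nonzero. For example, choose the columns so that $[\mathbf{1}_n,\textbf{\textit{u}}]$ with row $a$ deleted, and with row $b$ deleted, are each triangular with nonzero diagonal; a Vandermonde-type choice of $\textbf{\textit{u}}$ also works, since every maximal minor of an $n\times(n-1)$ Vandermonde matrix with distinct nodes is nonzero.

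Finally, once $\textbf{\textit{y}}$ is unique, the topology is read off from its nonzero entries, which completes the equivalence.
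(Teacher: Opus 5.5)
Your proof is correct, and it follows a genuinely different route from the paper.

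\textbf{What the paper does.} It identifies $\textbf{\textit{A}}(\textbf{\textit{v}})^\top$ with the rigidity matrix of the framework whose points are $\overline{\textbf{\textit{V}}}_i\in\mathbb{C}^\tau$ (Lemma~\ref{l1}). It then imports the generic rank $n\tau-\tau(\tau+1)/2$ from rigidity theory: redundant rigidity of the complete graph, plus a real-to-complex transfer through the rigidity matroid (Lemmas~\ref{l2}--\ref{l3}). Finally it solves the quadratic~\eqref{eq12} for $\tau$.

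\textbf{What you do.} You work on the dual side. Your $\textbf{\textit{M}}$ is exactly an equilibrium stress matrix of that same framework, and the parametrization $\textbf{\textit{P}}\textbf{\textit{S}}\textbf{\textit{P}}^\top$ computes $\ker\textbf{\textit{A}}(\textbf{\textit{v}})$ directly.

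\textbf{What your route buys.}
\begin{itemize}
\item It is self-contained.
\item Necessity uses only $\dim\mathcal{W}\le\tau+1$, so it visibly needs no genericity.
\item Sufficiency needs only that the two maximal minors of $[\mathbf{1}_n,\textbf{\textit{u}}]$ obtained by deleting rows $a$ and $b$ are nonzero. In fact, at $\tau=n-2$ your parametrization shows this condition is \emph{equivalent} to uniqueness. That gives a checkable data condition, in place of the abstract genericity test in Algorithm~\ref{alg1}.
\item Handling $\tau\ge n-1$ by monotonicity avoids a slip in the paper. The paper asserts the rank $n\tau-\tau(\tau+1)/2$ for all $\tau\le n$. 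At $\tau=n-1$ that value equals $e$, which exceeds the $\bar e$ columns of $\textbf{\textit{A}}(\textbf{\textit{v}})$; the complete graph minus an edge is not rigid in dimension $n-1$.
\end{itemize}

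\textbf{What the rigidity route buys.} It is more portable. For an arbitrary prior graph $\widehat{\mathcal{G}}$, uniqueness with $\tau$ generic measurements is exactly independence of $\widehat{\mathcal{E}}$ in the generic $\tau$-dimensional rigidity matroid. Your count relies on there being a single linear constraint $M_{a,b}=0$ on $\mathbb{S}^d$. With many missing edges you would need extra work to show those constraints are independent.

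\textbf{One citation to fix.} The Remark after Definition~\ref{d3} concerns the $n$ vectors $\overline{\textbf{\textit{V}}}_i\in\mathbb{C}^\tau$, not the columns of $[\mathbf{1}_n,\textbf{\textit{u}}]$. The full column rank you actually need follows from the same minor-nonvanishing argument you give for $p_ap_b$. At $\tau=n-2$ it is already implied by $p_a\neq 0$.
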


\begin{proof}
See the Appendix.
\end{proof}

\begin{remark}\label{Remark8}
    Theorem~\ref{t3} assumes full nodal observability, requiring independent, synchronized, phasor measurements at all grid nodes.
\end{remark}

\begin{remark}\label{Remark5}
    The genericity assumption in Theorem~\ref{t3} provides a fundamental mathematical guarantee for identifiability. However, its practical application relies on capturing sufficient natural or induced variability in the operating conditions to satisfy this genericity requirement.
\end{remark}

 \begin{remark}\label{r8}
 We outline our approach to estimate the electric grid topology and admittance parameters. At each node in the electric grid, collect $\tau$ measurements to find a unique solution $\textbf{\textit{y}}$. Following Theorem~\ref{t1}, \ref{theorem3} or~\ref{t3}, the unique solution $\textbf{\textit{y}}$ corresponds to $\tau = n-1$, $\tau = 1$, or $\tau = n-2$, respectively. Then, find the edge $(i,j)$ such that the admittance parameters $Y_{i,j}=0$. Here, edge $(i,j) \in \mathcal{E}$, $\widetilde{\mathcal{E}}$, or $\overline{\mathcal{E}}$ in the graph $\mathcal{G}$, $\widetilde{\mathcal{G}}$, or $\overline{\mathcal{G}}$, respectively. Edges with non-zero admittance parameters represent the true grid topology.
\end{remark}

\begin{remark}\label{remark5}
Linear regression or other estimation methods can be used to estimate the admittance matrix $\textbf{\textit{Y}}$, as established in Theorems~\ref{t1}-~\ref{t3}. However, the main contribution of this paper is the establishment of the necessary and sufficient conditions for the number of independent operating points (i.e., measurements) required to estimate the topology and admittance parameters of a completely unknown electric grid. 
\end{remark}

\begin{remark}\label{re4}
Our approach to electric grid topology and admittance estimation can be applied to any kind of electrical network (e.g., radial, meshed, three-phase, single-phase, etc). We initially consider a complete graph as it reflects the case of not having any prior information of the yet-to-be-estimated true grid topology. Our estimation process reveals edges with non-zero admittance parameters which define the true electric grid topology (that is typically not a complete graph).
\end{remark}

\section{Admittance matrix estimation in the presence of noise}\label{sectionV}

In the previous section, we showed that, in the absence of noisy measurements, there is a minimum number of voltage and current measurements required for admittance matrix identification. In practice, synchrophasor measurements are noisy. Accordingly, in this section, we consider the case where both the measured currents and voltages are noisy. 

Consider equation~\eqref{eq9} at measurement index $k \in \mathcal{T}$ and re-write it to include measurement noise as
\begin{equation}\label{eq13}
 \textbf{\textit{I}}^{(k)}+\Delta \textbf{\textit{I}}^{(k)} = [\Tilde{\textbf{\textit{A}}}(\textbf{\textit{V}}^{(k)})+\Delta\Tilde{\textbf{\textit{A}}}(\textbf{\textit{V}}^{(k)})]\textbf{\textit{y}}, \forall k \in \mathcal{T},
\end{equation}
where $\Delta \textbf{\textit{I}}^{(k)} \in \mathbb{C}^n$ and $\Delta\Tilde{\textbf{\textit{A}}}(\textbf{\textit{V}}^{(k)}) \in \mathbb{C}^{n\times \widehat{e}}$ are the measurement noises, corresponding to measurement index $k \in \mathcal{T}$, in $\textbf{\textit{I}}^{(k)}$ and $\Tilde{\textbf{\textit{A}}}(\textbf{\textit{V}}^{(k)})$, respectively. Specifically, $\Delta\Tilde{\textbf{\textit{A}}}(\textbf{\textit{V}}^{(k)})$ consists of the incidence matrix $\textbf{\textit{H}}$ and the measurement noise in the voltage vector $\Delta \textbf{\textit{V}}^{(k)} \in \mathbb{C}^n$. For the measurement index $k \in \mathcal{T}$, by equating real and imaginary terms,~\eqref{eq13} can be rewritten as follows:
\begin{multline}\label{eq14}
    \begin{bmatrix}
        \textbf{\textit{I}}_{Re}^{(k)} \\
        \textbf{\textit{I}}_{Im}^{(k)}
    \end{bmatrix} + \begin{bmatrix}
        \Delta\textbf{\textit{I}}_{Re}^{(k)} \\
        \Delta\textbf{\textit{I}}_{Im}^{(k)}
    \end{bmatrix} = \left(
    \begin{bmatrix}
        \textbf{\textit{H}} \mathrm{diag}(\textbf{\textit{H}}^\top\textbf{\textit{V}}_{Re}^{(k)}) & -\textbf{\textit{H}} \mathrm{diag}(\textbf{\textit{H}}^\top\textbf{\textit{V}}_{Im}^{(k)}) \\ \textbf{\textit{H}} \mathrm{diag}(\textbf{\textit{H}}^\top\textbf{\textit{V}}_{Im}^{(k)}) & \textbf{\textit{H}} \mathrm{diag}(\textbf{\textit{H}}^\top\textbf{\textit{V}}_{Re}^{(k)})
    \end{bmatrix}\right. \\ + \left.\begin{bmatrix}     \textbf{\textit{H}} \mathrm{diag}(\textbf{\textit{H}}^\top\Delta\textbf{\textit{V}}_{Re}^{(k)}) & -\textbf{\textit{H}} \mathrm{diag}(\textbf{\textit{H}}^\top\Delta\textbf{\textit{V}}_{Im}^{(k)}) \\ \textbf{\textit{H}} \mathrm{diag}(\textbf{\textit{H}}^\top\Delta\textbf{\textit{V}}_{Im}^{(k)}) & \textbf{\textit{H}} \mathrm{diag}(\textbf{\textit{H}}^\top\Delta\textbf{\textit{V}}_{Re}^{(k)}) \end{bmatrix} \right) \begin{bmatrix} \textbf{\textit{y}}_{Re} \\ \textbf{\textit{y}}_{Im} \end{bmatrix}, \forall k \in \mathcal{T},\end{multline}
where subscripts $Re$ and $Im$ represent the real and imaginary parts of the corresponding complex matrices, respectively.

Next, our goal is to estimate $\textbf{\textit{y}}_{Re}$ and $\textbf{\textit{y}}_{Im}$ since there are many choices due to a larger number of equations than unknowns (i.e., $n\tau > \widehat{e}$). In this case, we have an overdetermined system of equations. Since there is measurement noise in both the dependent $\textbf{\textit{I}}^{(k)}$ and independent $\Tilde{\textbf{\textit{A}}}(\textbf{\textit{V}}^{(k)})$ variables, $\forall k \in \mathcal{T}$, we can express the problem as a \emph{total least squares} problem. The measurement noise $\Delta\Tilde{\textbf{\textit{A}}}(\textbf{\textit{V}}^{(k)})$ will have the same structure as $\Tilde{\textbf{\textit{A}}}(\textbf{\textit{V}}^{(k)})$ according to the incidence matrix $\textbf{\textit{H}}$ in $\Tilde{\textbf{\textit{A}}}(\textbf{\textit{V}}^{(k)})$, $\forall k \in \mathcal{T}$. Accordingly, this type of total least square problem is known as a \emph{structured total least squares} problem; see \cite{Total}. Define a noise vector
$\textbf{\textit{s}}^{(k)} \in \mathbb{C}^{2(n+\widehat{e})}$ by $\textbf{\textit{s}}^{(k)} := [(\Delta \textbf{\textit{V}}_{Re}^{(k)})^\top, (\Delta \textbf{\textit{V}}_{Im}^{(k)})^\top, (\Delta \textbf{\textit{I}}_{Re}^{(k)})^\top, (\Delta \textbf{\textit{I}}_{Im}^{(k)})^\top]^\top, \forall k \in \mathcal{T}.$ If the vector $\textbf{\textbf{s}}^{(k)}$ is regarded as deterministic unknowns, the structured total least squares problem can be formulated as an equality constrained optimization problem as follows:
\begin{multline}\label{eq15}
\min_{(\textbf{\textit{s}}^{(k)})_{k=1}^{\tau},\textbf{\textit{y}}_{Re},\textbf{\textit{y}}_{Im}} \frac{1}{2} \sum_{k=1}^{\tau} (\textbf{\textit{s}}^{(k)})^\top W \textbf{\textit{s}}^{(k)} \\
\textrm{s.t.}
    \begin{bmatrix}
        \textbf{\textit{I}}_{Re}^{(k)} \\
        \textbf{\textit{I}}_{Im}^{(k)}
    \end{bmatrix} + \begin{bmatrix}
        \Delta\textbf{\textit{I}}_{Re}^{(k)} \\
        \Delta\textbf{\textit{I}}_{Im}^{(k)}
    \end{bmatrix} = \left(
    \begin{bmatrix}
        \textbf{\textit{H}} \mathrm{diag}(\textbf{\textit{H}}^\top\textbf{\textit{V}}_{Re}^{(k)}) & -\textbf{\textit{H}} \mathrm{diag}(\textbf{\textit{H}}^\top\textbf{\textit{V}}_{Im}^{(k)}) \\ \textbf{\textit{H}} \mathrm{diag}(\textbf{\textit{H}}^\top\textbf{\textit{V}}_{Im}^{(k)}) & \textbf{\textit{H}} \mathrm{diag}(\textbf{\textit{H}}^\top\textbf{\textit{V}}_{Re}^{(k)})
    \end{bmatrix}\right. \\ + \left.\begin{bmatrix}     \textbf{\textit{H}} \mathrm{diag}(\textbf{\textit{H}}^\top\Delta\textbf{\textit{V}}_{Re}^{(k)}) & -\textbf{\textit{H}} \mathrm{diag}(\textbf{\textit{H}}^\top\Delta\textbf{\textit{V}}_{Im}^{(k)}) \\ \textbf{\textit{H}} \mathrm{diag}(\textbf{\textit{H}}^\top\Delta\textbf{\textit{V}}_{Im}^{(k)}) & \textbf{\textit{H}} \mathrm{diag}(\textbf{\textit{H}}^\top\Delta\textbf{\textit{V}}_{Re}^{(k)}) \end{bmatrix} \right) \begin{bmatrix} \textbf{\textit{y}}_{Re} \\ \textbf{\textit{y}}_{Im} \end{bmatrix}, \forall k \in \mathcal{T},\end{multline}
where $W$ is a weighting matrix in $\mathbb{R}^{2(n+\widehat{e})\times 2(n+\widehat{e})}$ that is designed to penalize the error terms.

The optimization problem~\eqref{eq15} is non-convex due to the nonlinear equality constraint. There are many approaches to finding local solutions to non-convex problems \cite{jain2017non}. In \cite{Structure}, Newton’s method is used to locally solve the Karush-Kuhn-Tucker (KKT) conditions associated with the problem. We follow \cite[Section 3]{Structure} to solve the problem and find the admittance parameters.

\begin{remark}\label{r9}
The structured total least squares problem~\eqref{eq15} cannot be readily solved using off-the-shelf solvers and standard computing hardware for networks larger than $34$ nodes (translating to $561$ decision variables). To address this, one can take advantage of collecting multiple measurements of the voltage coefficient matrix $\Tilde{\textbf{\textit{A}}}(\textbf{\textit{V}}^{(k)})$  and the current vector $\textbf{\textit{I}}^{(k)}$ corresponding to each measurement index $k\in \mathcal{T}$ under the assumption that the rate of the change in the operating condition is much lower than the rate at which the measurements can be collected. One then can use the average of the collected measurements as ``surrogate" measurements $\Tilde{\textbf{\textit{A}}}(\textbf{\textit{V}}^{(k)})$ and $\textbf{\textit{I}}^{(k)}$. The averaged values $\Tilde{\textbf{\textit{A}}}(\textbf{\textit{V}}^{(k)})$ and $\textbf{\textit{I}}^{(k)}$ are then concatenated correspondingly for all $k\in\mathcal{T}$ to obtain $\textbf{\textit{A}}(\textbf{\textit{v}})$ and $\Tilde{\textbf{\textit{i}}}$. The admittance parameters can then be determined by solving the ordinary least squares corresponding to solving \eqref{eq10}, rather than the structured total least squares problem~\eqref{eq15}. The procedure above relies on the plug-in principle to estimate the admittance parameters \cite{wright2011using}.
\end{remark}

In Algorithm~\ref{alg1}, we determine the 
\emph{estimated grid topology}, $\mathcal{\hat{G}} (\mathcal{{N},\hat{E}})$ and the \emph{estimated admittance vector}, $\hat{\textbf{\textit{y}}}$, using noisy measurements. Specifically, for a given set of nodes ${\mathcal{N}}$, we estimate the set of edges $\hat{\mathcal{E}}$ of the graph $\mathcal{\hat{G}} (\mathcal{{N},\hat{E}})$. Prior topology information is an input to Algorithm~\ref{alg1}, which we denote by $\beta$. Specifically, prior topology information includes: (1) we know the grid topology is a tree, i.e., $\beta=\mathcal{\widetilde{G}}$; or (2) we know one edge is non-existent in the electric grid, i.e., $\beta=\mathcal{\overline{G}}$. Otherwise, with no prior topology information, we estimate the grid topology with a complete graph $\mathcal{G}$, i.e., $\beta=0$. In Algorithm~\ref{alg1}~(Line~\ref{line1}), we construct the nodal voltage vectors $[\overline{\textbf{\textit{V}}}_1, \overline{\textbf{\textit{V}}}_2,\dotsc, \overline{\textbf{\textit{V}}}_n]$, where $\overline{\textbf{\textit{V}}}_i = [V_{i}^{(1)},V_{i}^{(2)},\dotsc, V_{i}^{(\tau)}]^\top \in \mathbb{C}^{\tau}$ for all $\{1,2,\dotsc,\tau\} \in \mathcal{T}$, and confirm that the genericity property required in Theorem~\ref{t3} is satisfied (see also Definition~\ref{d3}). In Line~\ref{line2}, we determine the minimum number of measurements according to whether or not we have prior topology information (see Theorems~\ref{t1},~\ref{theorem3} and~\ref{t3}). In Line~\ref{line3}, we collect noisy voltage and current measurements up to the required minimum number $\tau$. In Line~\ref{line4}, we construct the incidence matrix $\textbf{\textit{H}}$ for the graph corresponding to the prior topology information input, $\beta$, as defined in~\eqref{equation1}. In Line~\ref{line5}, we solve the optimization problem~\eqref{eq15} to obtain $\textbf{\textit{y}}_{Re}$ and $\textbf{\textit{y}}_{Im}$. In Lines~\ref{line6}-\ref{line10}, we seek to estimate the admittance vector $\hat{\textbf{\textit{y}}}$. More specifically, in Lines~\ref{line7}-\ref{line9} we set to zero the admittance parameters with a magnitude that is less than a design threshold, $\alpha$. In Line~\ref{line11}, we determine the set of edges with nonzero admittance parameters $\hat{\mathcal{E}}$. In Line~\ref{line12}, we return the estimated topology $\hat{\mathcal{G}}(\mathcal{N},\hat{\mathcal{E}})$.

\begin{algorithm}[ht]
\caption{Returning the Estimated Grid Topology $\hat{\mathcal{G}}$ and Estimated Admittance Vector $\hat{\textbf{\textit{y}}}$ using Noisy Measurements.}
\label{alg1}
\begin{algorithmic}[1]
\Require $\beta =\{0,\widetilde{\mathcal{G}},\overline{\mathcal{G}}\}$, $n$, $\alpha = 0.5$, $\{\textbf{\textit{V}}^{(k)}, \textbf{\textit{I}}^{(k)}\}_{k=1}^{n-1}$, $W$.
\State Construct $[\overline{\textbf{\textit{V}}}_1, \overline{\textbf{\textit{V}}}_2,\dotsc, \overline{\textbf{\textit{V}}}_n]$ and confirm that the genericity property required in Theorem~\ref{t3} is satisfied.\label{line1}
\State $\{\tau, \widehat{e}\} \gets 
\begin{cases}
 & \textbf{if}~ \beta = 0; \tau=n-1, \widehat{e}=\frac{n(n-1)}{2}; \\
 & \textbf{if}~ \beta = \mathcal{\widetilde{G}}; \tau=1, \widehat{e} = n-1; \\
 & \textbf{if}~ \beta = \mathcal{\overline{G}}; \tau=n-2, \widehat{e}=\frac{n(n-1)}{2}-1.
\end{cases}$\label{line2}
\State Collect $\textbf{\textit{V}}^{(k)}$ and $\textbf{\textit{I}}^{(k)}$ for all $k=\{1,2,\dotsc,\tau\}$.\label{line3}
\State Construct $\textbf{\textit{H}} \in \mathbb{R}^{n\times \widehat{e}}$.\label{line4}
\State Solve~\eqref{eq15} to obtain $\textbf{\textit{y}}_{Re}, \textbf{\textit{y}}_{Im}$.\label{line5}
\State Construct $\hat{\textbf{\textit{y}}} =[\textbf{\textit{y}}_{Re} + i\textbf{\textit{y}}_{Im}] = [\hat{y}_{1,2}, \hat{y}_{1,3}, \dotsc, \hat{y}_{n-1,n}]$.\label{line6}
\For{edge $(i,j) = 1:\widehat{e}$}\label{line7}
    \State $|\hat y_{i,j}|<\alpha \;\rightarrow\; \hat y_{i,j}=0$.\label{line8}
\EndFor\label{line9}
\State \Return $\hat{\textbf{\textit{y}}}= [\hat{y}_{1,2}, \hat{y}_{1,3}, \dotsc, \hat{y}_{n-1,n}]$.\label{line10}
\State $\hat{\mathcal{E}} \gets \{ (i,j): \hat{y}_{i,j} \neq 0 \}$.\label{line11}
\State \Return $\hat{\mathcal{G}}(\mathcal{N},\hat{\mathcal{E}})$.\label{line12}  
\end{algorithmic}
\end{algorithm}

In Algorithm~\ref{alg1}, we design the threshold $\alpha$ to distinguish true physical line connections from noise-induced errors. That is, if $\alpha$ is too large, we risk over-pruning lines with low admittance. Similarly, if $\alpha$ is too small we risk misclassifying grid edges due to non-zero error residuals. Accordingly, the threshold $\alpha$ must be strictly bounded such that $\sigma_{\text{noise}} < \alpha < |y_{\text{min}}|$, where $|y_{\text{min}}|$ is the minimum absolute admittance magnitude in the physical grid, and $\sigma_{\text{noise}}$ is the estimation error variance. In practice, we envision that $\alpha$ will be calibrated systematically using historical PMU measurements to ensure it exceeds the measurement error while remaining below the lowest line admittance parameter in the grid.

\section{Numerical simulation}\label{sect_simulations}

In this section, we illustrate our theorems for the cases where the electric grid is defined by a complete graph $\mathcal{G}(\mathcal{N},\mathcal{E})$, a tree $\widetilde{\mathcal{G}}(\mathcal{N},\widetilde{\mathcal{E}})$, and a complete graph minus one edge $\overline{\mathcal{G}}(\mathcal{N},\overline{\mathcal{E}})$. Specifically, without measurement noise, we illustrate the minimum number $\tau$ of voltage and current phasor measurements required to uniquely estimate the topology and admittance parameters of the grid. In the presence of measurement noise, we estimate the topology and admittance parameters with $\tau$ voltage and current phasor measurements using the proposed structured total least squares approach. Note that each node is equipped with a PMU and, for each measurement, voltage and current phasors are collected from each node in the grid.

In the simulations that follow, we consider a diverse range of grid topologies. Specifically, we consider three IEEE test networks: (1) a modified IEEE 13-node feeder that represents an unbalanced, radial distribution circuit; (2) the IEEE 14-node network that represents a balanced, meshed transmission grid; and (3) the IEEE 123-node feeder, representing a larger distribution circuit. Specifically, the numerical simulations for the IEEE 13-node, IEEE 14-node, and IEEE 123-node networks are used to validate Theorem~1, Theorem~4, and Theorem~3, respectively.
\begin{figure} [!ht]
    \centering
\includegraphics[scale=0.4]{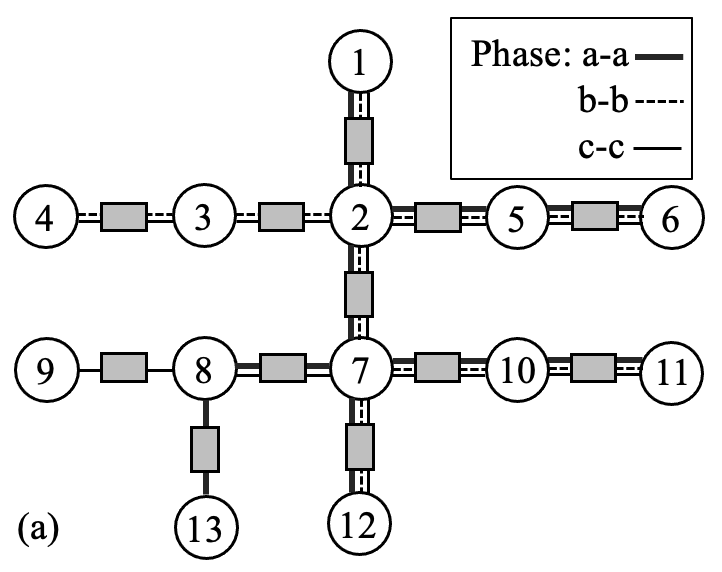} \includegraphics[scale=0.4]{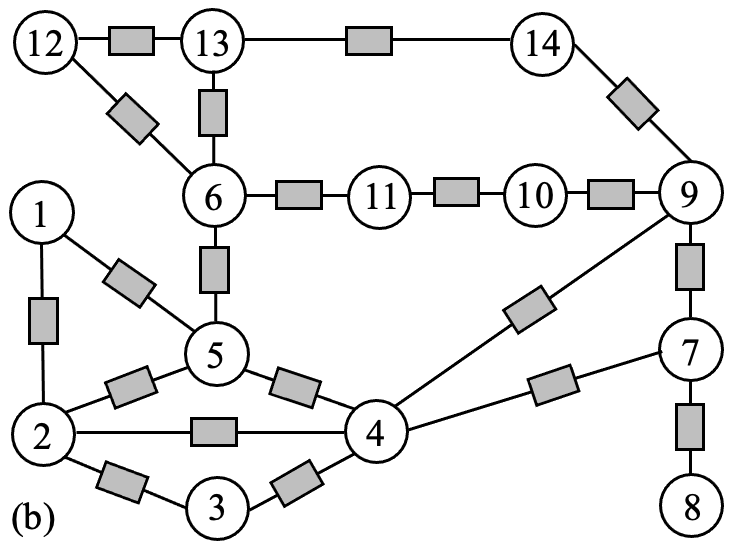}
    \caption{Topology of: (a) a radial distribution feeder represented by a modified version of the IEEE 13-node feeder \cite{ieee13node} and; (b) a transmission system represented by an IEEE 14-node network \cite{ieee14bus}. The three-phase balanced nature of the IEEE 14-node network is reflected by the use of single lines.}  \label{Fig10}
\end{figure}

\subsection{Admittance matrix estimation without prior topology information}\label{Section5A}

In the absence of measurement noise, we uniquely estimate the topology and admittance parameters of a modified version of the IEEE 13-node feeder \cite{ieee13node}. The IEEE 13-node feeder represents a three-phase unbalanced distribution feeder, with single-phase and two-phase laterals. We modify the IEEE 13-node feeder by replacing the transformer between nodes 5 and 6, and the switch between nodes 7 and 10 with the admittance reported between nodes 2 and 5, and between nodes 10 and 11, respectively. We also remove the capacitors between nodes 9 and 11 as done in \cite{7741261}. In Fig.~\ref{Fig10}~(a), we present the modified IEEE 13-node feeder. 

Each of the 13 nodes of the modified IEEE 13-node feeder represent phases \emph{a}, \emph{b}, \emph{c}, or a combination of the three phases, that is, each node is either a single-phase node, a two-phase node or a three-phase node. More specifically, there are three-phase laterals from nodes~$1$--$2$--$7$--$12$, from nodes $2$--$5$--$6$, from nodes $7$--$10$--$11$, and two-phase laterals from nodes~$2$--$4$, nodes~$7$-$8$, and single-phase laterals from nodes~$8$--$9$ and nodes~$8$--$13$. In total, the IEEE 13-node feeder can be represented by $32$ single phase nodes.

In what follows, the $32$ single phase nodes in the modified IEEE 13-node feeder are represented by a complete graph. That is, we consider an admittance network defined by $\mathcal{G}(\mathcal{N},\mathcal{E})$, with $n = 32$. To identify the topology and admittance parameters of the modified IEEE 13-node feeder, we first collect voltage and current phasor measurements at each of the $32$ nodes. We consider $\tau$ voltage measurements $[\textbf{\textit{V}}^{(1)},\textbf{\textit{V}}^{(2)},\dotsc,\textbf{\textit{V}}^{(\tau)}]$ and corresponding current measurements $[\textbf{\textit{I}}^{(1)},\textbf{\textit{I}}^{(2)},\dotsc,\textbf{\textit{I}}^{(\tau)}]$.

Our approach to collect voltage and corresponding current measurements in order to obtain the admittance parameters for the graph $\mathcal{G}(\mathcal{N},\mathcal{E})$ follows.  For the first measurements corresponding to $k=1$, where $k\in \mathcal{T}$, we construct the current $I_j$ at each node $j\in\mathcal{N}$, adhering to Kirchhoff's current and voltage laws, where the voltage $V_j$ at each node $j\in\mathcal{N}$ is as specified by the IEEE 13-node feeder \cite{ieee13node}. For the following measurements $(k>1)$, the voltage vectors $\textbf{\textit{V}}^{(k)}$ are selected by adding the voltage vector at $k = 1$, $\textbf{\textit{V}}^{(1)}$, to vectors of pseudorandom values between $-0.05|\textbf{\textit{V}}^{(1)}|$ and $0.05|\textbf{\textit{V}}^{(1)}|$. The current vectors $\textbf{\textit{I}}^{(k)}$ are obtained similarly as in the first measurement. The voltage and current phasor measurements for each of the nodes in the modified IEEE 13-node feeder are then applied to the respective nodes in the subgraph $\mathcal{G}(\mathcal{N},\mathcal{E})$. That is, our numerical simulations include voltage measurements with pseudorandom values between $-0.05|\textbf{\textit{V}}^{(1)}|$ and $0.05|\textbf{\textit{V}}^{(1)}|$ that satisfy the genericity assumption in Theorem~\ref{t3}. 

We obtain the admittance vector $\textbf{\textit{y}}$ in~\eqref{eq10} by multiplying $\Tilde{\textbf{\textit{i}}}$ by the pseudo-inverse of $\textbf{\textit{A}}(\textbf{\textit{v}})$. We observe that the identified admittance vector $\textbf{\textit{y}}$ corresponds to the admittance vector of the modified IEEE 13-node feeder \cite{ieee13node}. The identified grid topology as illustrated in Fig.~\ref{Fig10}~(a) is not a complete graph.

\begin{table}[!t]
    \caption{Number of measurements required to uniquely identify the admittance vector $\textbf{\textit{y}}$ in \eqref{eq10} for graph $\mathcal{G}(\mathcal{N},\mathcal{E})$, where $n = 32$.\label{tab:Tab2}}
    \centering
    \begin{tabular}{ *{4}{c} }
\hline
\makecell{Number of \\ measurements ($\tau$)}  & $\mathrm{rank}(\textbf{\textit{A}}(\textbf{\textit{v}}))$ & \makecell{Number of \\ unknowns} & Solution \\
\hline
$\tau = 29$  & 493 & 496 & Not unique  \\
\hline $\tau = 30$  & 495 & 496 & Not unique \\
\hline $\tau = 31$ & 496 & 496 & Unique  \\
\hline
\end{tabular}
\end{table}

Table~\ref{tab:Tab2} presents the number of measurements $\tau$ required for a unique solution of~\eqref{eq10} for the radial distribution feeder as illustrated in Fig.~\ref{Fig10}~(a). At $\tau = 30$ the solution remains non-unique, whereas at $\tau = 31$ the rank of $\textbf{\textit{A}}(\textbf{\textit{v}})$ equals the number of unknowns, ensuring uniqueness. This is in agreement with Theorem~\ref{t1} and~\ref{t2}, where at least $31$ measurements ($\tau = n-1$) are required to uniquely determine the admittance vector $\textbf{\textit{y}}$. 

\subsection{Admittance matrix estimation without prior topology information and with noisy measurements}\label{section5b}

In the presence of noisy synchrophasor measurements, we estimate the topology and admittance parameters with $\tau$ voltage and corresponding current phasor measurements using the proposed structured total least squares approach. That is, we apply the approach to the modified IEEE 13-node radial feeder that is represented by a complete graph $\mathcal{G}(\mathcal{N},\mathcal{E})$, with $n=32$.

As outlined in Algorithm~\ref{alg1}, we first collect $\tau$ voltage and current phasor measurements. Then, the admittance vector is estimated by solving the equality constrained optimization~\eqref{eq15} via Newton's method as described in \cite{Structure}. Note that the MATLAB `$\backslash$' command is used to solve the linear system of equations in \cite[Equation (15)]{Structure}. Next, we estimate the grid topology as in Algorithm~\ref{alg1}, using a stopping criterion of $10^{-5}$.  All simulations are carried out on a MacBook Pro with 16GB RAM and 2.3 GHz Quad-Core Intel Core i7 processor.

In what follows, we outline our approach to collecting noisy voltage and current measurements for each node. First, the voltage vectors corresponding to $k = 1$, $\textbf{\textit{V}}^{(1)}$, for the IEEE 13-node are as reported in \cite{ieee13node}. The following measurements $(k>1)$ for the voltage vectors $\textbf{\textit{V}}^{(k)}$ are obtained by adding the voltage vector at $k = 1$, $\textbf{\textit{V}}^{(1)}$, to vectors of pseudorandom values between $-0.05|\textbf{\textit{V}}^{(1)}|$ and $0.05|\textbf{\textit{V}}^{(1)}|$. The current vectors $\textbf{\textit{I}}^{(k)}$ are then obtained by multiplying $\Tilde{\textbf{\textit{A}}}(\textbf{\textit{V}}^{(k)})$ by the actual admittance parameters $\textbf{\textit{y}}$. Next, we introduce measurement noise to the modified IEEE 13-node feeder. Specifically, we represent the measurement noises $\Delta\textbf{\textit{V}}_{Re}^{(k)}, \Delta\textbf{\textit{V}}_{Im}^{(k)}, \Delta\textbf{\textit{I}}_{Re}^{(k)}, \Delta\textbf{\textit{I}}_{Im}^{(k)}$ using a standard normal distribution with zero mean and variance {$\sigma = 0.003|\textbf{\textit{V}}^{(k)}|$}\cite{brady2020phasor}.

In Fig.~\ref{Fig11}, we present the total absolute error between the estimated admittance parameters and the actual admittance parameters of the IEEE 13-node feeder in accordance with the number of measurements. According to Fig.~\ref{Fig11}, we observe a larger error when the number of measurements is less than the minimum required of $\tau = n-1=31$, as per Theorem~\ref{t1}. Additionally, the error decreases with an increased number of measurements.
\begin{figure} [!ht]
    \centering
\includegraphics[scale=0.51]{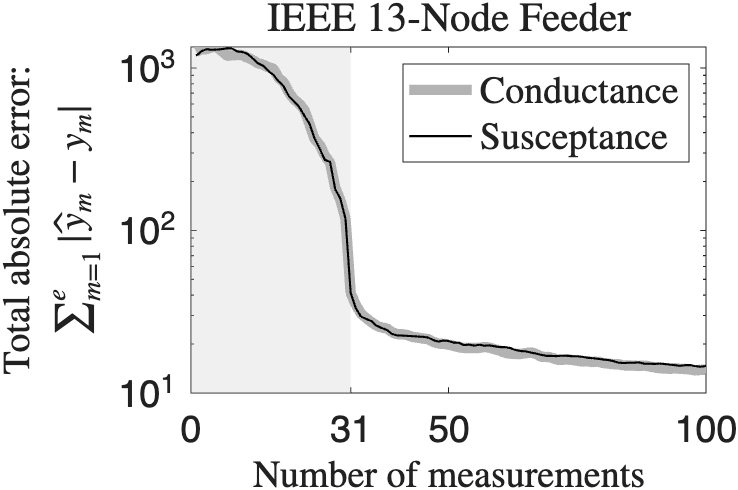} 
    \caption{Total absolute error between the estimated admittance (represented by the conductance and susceptance) and the actual admittance of the: (a) IEEE 13-node feeder with the number of nodes $n=32$. The shaded area represents the area where the number of measurements is below our minimum thresholds, $\tau=31$, resulting in large errors in estimated admittance parameters.}  \label{Fig11}
\end{figure}
\begin{figure} [!ht]
    \centering
\includegraphics[scale=0.49]{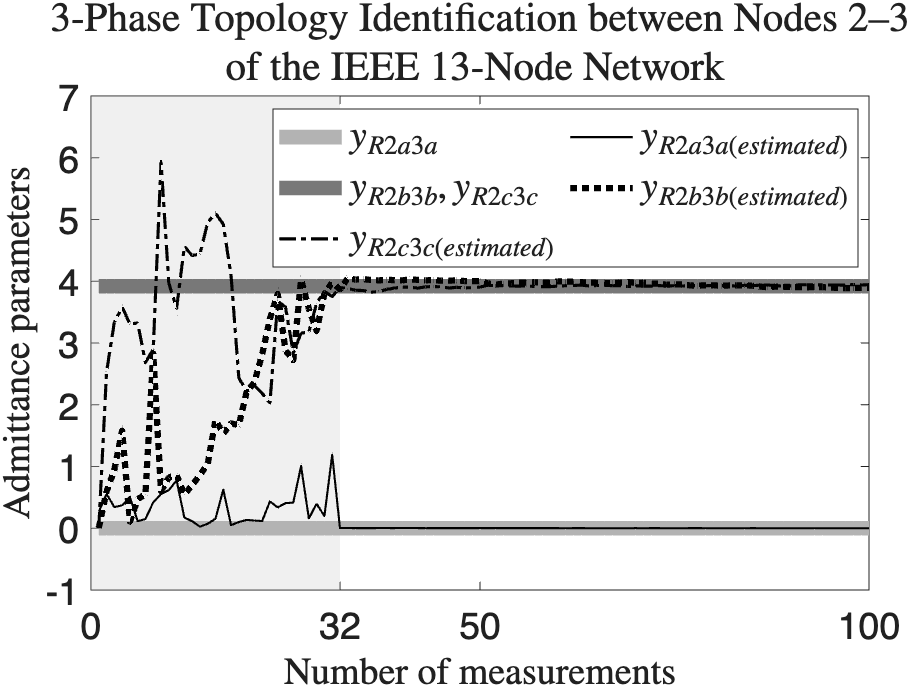} 
    \caption{Nodes~$2$--$3$ of the IEEE 13-node network: 3-phase topology identification. The shaded area represents the case when the number of measurements is below our minimum threshold ($\tau = 32$) resulting in large errors.}  \label{Fig12}
\end{figure}

Our approach to topology and admittance estimation can be applied to identify networks with a combination of single-phase, two-phase and three-phase laterals. That is, we consider the problem of 3-phase topology identification\footnote{Extensions to a 4-wire system are straightforward and result in increasing the number of nodes and thus the number of unknown admittance parameters.}. To solve the 3-phase topology identification problem, first we assume all three phases are connected at each node. Then, we solve~\eqref{eq15} to estimate the admittance parameters. If the estimated admittance parameter between two nodes for any of the three-phase laterals is zero, then we determine that lateral is not electrically connected. For example, consider the three-phase IEEE 13-node feeder in Fig.~\ref{Fig10}~(a), which contains three-phase, two-phase and single-phase laterals and has $32$~nodes that connect each of the laterals. To solve the 3-phase topology identification problem on the two-phase lateral from nodes $2$--$4$, our approach is as follows. First, consider lateral $2$--$3$ includes all three phases (i.e., consider a $33$~node network). Then, solve~\eqref{eq15} to identify which two of the three phases at node~$2$ extend to node~$3$. By identifying which two of the three-phase laterals extend to node~$3$, it is clear (from the grid physics) that the same two-phase laterals can extend to node~$4$. Extensions to identify other single-phase, and two-phase laterals in the IEEE 13-node feeder are straightforward, and follow the same approach. 

In Fig.~\ref{Fig12}, we present the admittance parameters of the modified IEEE 13-node network between nodes~$2$--$3$. As this is a two-phase lateral, phase `\emph{a}' is represented by an admittance parameter of zero. In Fig.~\ref{Fig12}, we also present the estimated admittance parameters using our approach to 3-phase topology identification. According to Fig.~\ref{Fig12}, we observe that when the number of measurements $k < \tau$, the estimated admittance parameters (conductances) of the three-phase laterals between nodes~$2$--$3$ are far from the actual values. However, when the number of measurements $k > \tau$, the estimated admittance parameters
closely track the actual values. We also observe that when the number of measurements $k > \tau$, the estimated admittance parameter, $y_{R2a3a}$, for the phase `\emph{a}' lateral approaches zero. This indicates that the phase \emph{a} lateral between nodes~$2$--$3$ is not connected. Thus, a two-phase lateral connects node~$3$ to node~$2$, which is composed of phases `\emph{b}' and `\emph{c}'.

\subsection{Admittance matrix estimation with noisy measurements for a complete graph minus one edge}\label{section5c}

In this section, we estimate the topology and admittance parameters of the IEEE 14-node meshed transmission network \cite{ieee14bus} using our proposed structured total least squares approach. In more detail, consider the three-phase balance IEEE 14-node feeder as depicted by its single-phase (positive sequence) representation in Fig.~\ref{Fig10}(b). To validate Theorem~\ref{t3}, we remove a single non-existent edge from the complete graph representation of the IEEE 14-node feeder, thereby rendering the graph incomplete. That is, we represent the (balanced) IEEE 14-node transmission feeder by $\overline{\mathcal{G}}(\mathcal{N},\overline{\mathcal{E}})$, with $n=14$.

We collect noisy synchrophasor measurements and apply the structured total least squares approach as in the previous section (see the example for the IEEE 13-node feeder). The admittance parameters and corresponding nominal voltages for the IEEE 14-node feeder are as reported in \cite{ieee14bus}.

\begin{figure} [!ht]
    \centering
\includegraphics[scale=0.51]{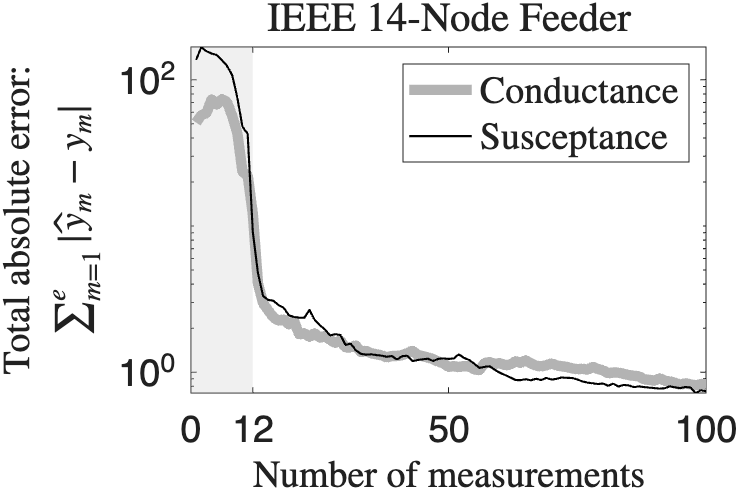}
    \caption{Total absolute error between estimated admittance and actual admittance of the IEEE 14-node feeder with the number of nodes $n=14$. The shaded area represents the area where the number of measurements is below our minimum thresholds, $\tau=12$, resulting in large errors.}  \label{Figure4}
\end{figure}

In Fig.~\ref{Figure4}, we present the total absolute error between the estimated admittance parameters and the actual admittance parameters of the IEEE 14-node network in accordance with the number of measurements. According to Fig.~\ref{Figure4}, we observe a larger error when the number of measurements is less than the minimum requirement, $\tau = n-2=12$, as per Theorem~\ref{t3}. Additionally, the error decreases with an increased number of measurements. Our results confirm that Algorithm~\ref{alg1} is applicable to any kind of electrical network including radial, meshed, three-phase, and single-phase configurations.

\subsection{Admittance matrix estimation with noisy measurements for a rooted tree graph}\label{section5d}

In the previous sections, we considered the case of no or very limited prior grid topology information. That is, we initially represented the electric grid by a complete (or a complete minus one edge) graph, see Remark~\ref{re4}. Accordingly, the procedure to estimate the grid topology and admittance parameters was computationally intensive as it involved a large number of unknown edge variables. In practice, full or partial knowledge of the grid topology is available --- such as a subset of known connections from utility records. In what follows, we leverage prior knowledge of the grid topology to reduce the computational complexity of our estimation method, improving scalability to larger networks. Recall, Remark~\ref{r9} includes an approach to improve computational scalability for larger electric grids (more than 34 nodes) without prior topology knowledge. 

We apply our estimation method to incorporate full prior knowledge of the grid topology, considering the IEEE 123-node feeder. The IEEE 123-node feeder, as in \cite{ieee123node}, is assumed to be balanced and is described with positive sequence impedance data. In what follows, since the IEEE 123-node feeder has a tree topology, we consider a single-phase representation of the IEEE 123-node feeder defined by $\widetilde{\mathcal{G}}(\mathcal{N},\widetilde{\mathcal{E}})$, with $n=123$. Also, we collect synchrophasor measurements as per the previously described approach (see the example for the IEEE 13-node feeder).

To apply our estimation method we remove non-existent edges from the complete graph representation of the IEEE 123-node feeder.  As per Theorem~\ref{theorem3}, the minimum number of measurements required is $\tau = 1$.  Then, as per Algorithm~\ref{alg1}, we estimate the admittance parameters by solving equation~\eqref{eq15}.

In Fig~\ref{Fig6}, we present the total absolute error between the estimated and actual admittance parameters of the IEEE 123-node feeder, corresponding to the number of measurements collected. In Fig~\ref{Fig6}, the total absolute error reduces as the number of measurements increases, noting the minimum measurement threshold is $\tau=1$. We anticipate our adapted estimation method would support extensions to larger networks, considering prior topology information, as it significantly reduces the required measurements as per Algorithm~\ref{alg1}.

\begin{figure}[!t]
\centering
\includegraphics[scale=0.51]{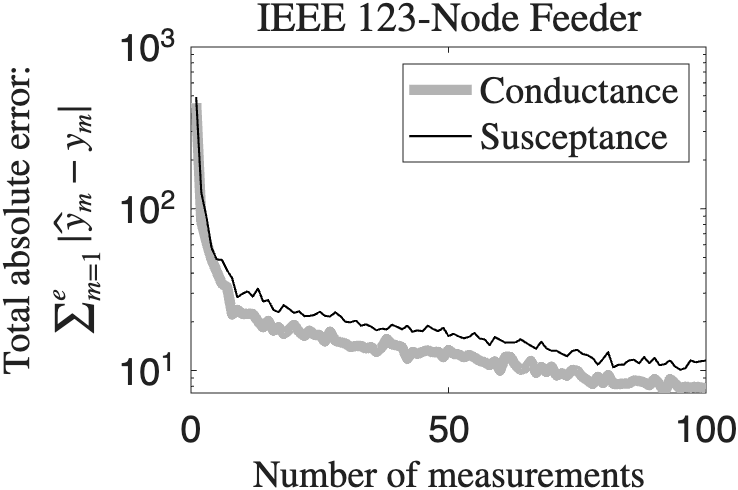}
\caption{Total absolute error between estimated admittance (represented by the conductance and susceptance) and actual admittance of the IEEE 123-node feeder when there is prior topology information. The minimum measurement threshold for this case is $\tau = 1$.}
\label{Fig6}
\end{figure}

\begin{remark}\label{remark3}
    From a practical standpoint, for a rooted tree graph $\widetilde{\mathcal{G}}(\mathcal{N},\widetilde{\mathcal{E}})$, although the rank condition in Theorem~\ref{theorem3} holds, i.e., $\mathrm{rank}(\textbf{\textit{A}}(\textbf{\textit{v}})) = n-1$, small voltage phasor differences (i.e., $V_i \approx V_j$) across two nodes connected by edge $(i,j)$ may result 
    in an ill-conditioned matrix $\Tilde{\textbf{\textit{A}}}(\textbf{\textit{V}})$. Consequently, methods to estimate admittance parameters require careful numerical implementation in the presence of noise and typically required more than one measurement as observed in Figure~\ref{Fig6}.
\end{remark}

\section{Conclusion}\label{sect_conclusion}

In this paper, we established necessary and sufficient conditions for the number of independent operating points (measurements) required to estimate the topology and admittance parameters of a completely unknown electric grid. The necessity result provides a fundamental limit on the required number of independent operating points, that holds irrespective of network size. The sufficiency result is also valid for networks of any scale, but computational feasibility is reduced for very large networks such as the entire bulk grid. Nonetheless, it remains broadly applicable to practical partitions of the grid. 

Our approach considered all kinds of electric grids, including radial distribution feeders and meshed transmission networks. Without measurement noise, the admittance matrix and grid topology can be uniquely estimated when the number of measurements $\tau$, is equal to the number of single-phase (or three-phase balanced) nodes $n$, less one (i.e., $\tau = n-1$). When we know a certain edge is non-existent in the grid, the required minimum number of measurements $\tau$, is equal to the number of nodes $n$, less two (i.e., $\tau = n-2$), provided that the network has at least $4$ nodes. In the presence of measurement noise, the admittance matrix is estimated using a structured total least squares approach. Accordingly, we no longer obtain unique admittance parameters, however, the minimum number of measurements $\tau$ required to approximately estimate the admittance matrix and grid topology is consistent. That is, without prior topology information of the grid, $\tau \geq n-1$, when a single edge is non-existent in the grid, $\tau\geq n-2$, ($n\geq 4$), and when the grid has a tree topology, $\tau \geq 1$. 

Our estimation method is demonstrated on the IEEE 14-node transmission network and the IEEE 13-node distribution 
feeder, including phase identification of a two-phase lateral. We also demonstrate the scalability of our method using the IEEE 123-node feeder, leveraging prior topology information. Future work to consider much larger grids with unknown topologies is possible. Moreover, future work to consider limited or sparse PMU measurements and realistic noise characteristics is also possible.

\section*{Acknowledgment}
The first author would like to thank Dr. Nanduni Nimalsiri from CSIRO for her helpful comments.


\section*{Appendix}

Following Definition~\ref{d3}, we include more definitions related to genericity and rigid frameworks in order to show that complete graphs are globally and redundantly rigid and to establish the rank of $\textbf{\textit{A}}(\textbf{\textit{v}})$; see \cite{Therigidity, Conditions, Globally, gortler2014generic, jordan2017global, jordan2016ii} for more details. 

\begin{definition}[Graph Realization {\cite[Section 1]{Conditions}}]\label{d1}
     A \emph{realization} of a graph \emph{$\mathcal{G}$} is a mapping \emph{$\textbf{\textit{x}}$} from the nodes of \emph{$\mathcal{G}$} to points in Euclidean space.
\end{definition}
\begin{definition}[Framework {\cite[Section 1.2]{jordan2016ii}}]\label{d2}
     A \emph{$\tau$}-dimensional \emph{framework} is a pair \emph{$(\mathcal{G}, \textbf{\textit{x}})$}, where \emph{$\mathcal{G}$} is a graph and \emph{$\textbf{\textit{x}}$} is a realization that maps the nodes of \emph{$\mathcal{G}$} to points in the $\tau$-dimensional Euclidean space $\mathbb{R}^\tau$.
\end{definition}
\begin{definition}[Generic Frameworks {\cite[Defintion 6]{gortler2014generic}}]\label{d4}
     A framework \emph{$(\mathcal{G}, \textbf{\textit{x}})$} is \emph{generic} if the realization of the nodes do not satisfy any non-zero algebraic equation with rational coefficients. 
\end{definition} 

\begin{remark}[Number of trivial infinitesimal motions]\label{r6}
From \cite[Section 2.1]{Conditions} we know that, for a framework in $\tau$-dimension, there are $\tau$ translations and $\tau(\tau-1)/2$ rotations. 
\end{remark}

\begin{remark}[Complete graphs are globally and redundantly rigid]\label{r7}
    Following from \cite[Theorem 4.1.2]{jordan2016ii}, a graph $\mathcal{G}$ is globally rigid if and only if $\mathcal{G}$ is a complete graph. This implies that, if $\mathcal{G}$ is a complete graph in $\mathbb{R}^\tau$, then $\mathcal{G}$ is generically globally rigid as well as redundantly rigid in $\mathbb{R}^\tau$.
\end{remark}

\begin{definition}[Redundantly rigid graphs {\cite[Section 2.1]{Globally}}]\label{d5}
     A graph $\mathcal{G}$ is redundantly rigid in a given dimension if it remains rigid after deleting an edge.
\end{definition}

Next, we introduce $\textbf{\textit{R}}(\textbf{\textit{x}})\in \mathbb{R}^{e\times n\tau}$ as the \emph{rigidity matrix} of a framework $(\mathcal{G}, \textbf{\textit{x}})$ \cite{Conditions}. Algebraically, the rigidity matrix is constructed to capture how the graph’s edge length constraints depend on the relative positions of nodes. For the framework $(\mathcal{G},\textbf{\textit{x}})$, the rigidity matrix 
$\textbf{\textit{R}}(\textbf{\textit{x}})$ is formed by assigning a row to each edge 
$(i,j)\in \mathcal{E}$. This row contains nonzero elements only in the columns corresponding to the coordinates of nodes $i$ and $j$. Specifically, for the row corresponding to an edge connecting node $i$ to $j$ in a $2$-dimensional plane, the nonzero entries in column $2i-1, 2i, 2j-1, 2j$ are $x_{i}^{(1)} - x_{j}^{(1)}, x_{i}^{(2)} - x_{j}^{(2)}, x_{j}^{(1)} - x_{i}^{(1)}, x_{j}^{(2)} - x_{i}^{(2)}$, respectively. Therefore, $\textbf{\textit{R}}(\textbf{\textit{x}})$ is an $\bar{e} \times n\tau$ matrix that encodes the relative geometric relationships among nodes rather than their absolute locations. 

The following lemma shows the relationship between the matrices $\textbf{\textit{R}}(\textbf{\textit{x}})$ and ${\textbf{\textit{A}}(\textbf{\textit{v}})}^\top$ when we choose $\textbf{\textit{x}}= \textbf{\textit{v}}$. 

\begin{lemma}\label{l1}
 The matrix ${\textbf{\textit{A}}(\textbf{\textit{v}})}^{\top} \in \mathbb{R}^{e \times n\tau}$ has the same rank as the rigidity matrix $\textbf{\textit{R}}(\textbf{\textit{x}})\in \mathbb{R}^{e \times n\tau}$ when $\textbf{\textit{v}} = \textbf{\textit{x}}$.
\end{lemma}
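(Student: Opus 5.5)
The plan is to show that ${\textbf{\textit{A}}(\textbf{\textit{v}})}^{\top}$ and $\textbf{\textit{R}}(\textbf{\textit{x}})$ differ only by a permutation of columns once $\textbf{\textit{x}}=\textbf{\textit{v}}$. Since a column permutation is right-multiplication by an invertible permutation matrix $\textbf{\textit{P}}$, this gives $\mathrm{rank}({\textbf{\textit{A}}(\textbf{\textit{v}})}^{\top})=\mathrm{rank}(\textbf{\textit{R}}(\textbf{\textit{x}})\textbf{\textit{P}})=\mathrm{rank}(\textbf{\textit{R}}(\textbf{\textit{x}}))$. I will read $\textbf{\textit{x}}=\textbf{\textit{v}}$ as the realization that places node $i$ at the point $\overline{\textbf{\textit{V}}}_i=[V_i^{(1)},\dotsc,V_i^{(\tau)}]^\top$. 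This puts the $n$ nodes in $\tau$-dimensional space, matching the $n\tau$ columns of $\textbf{\textit{R}}(\textbf{\textit{x}})$ and the $\tau$-dimensional frameworks of Definition~\ref{d2}.

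First I will compute the entries of ${\textbf{\textit{A}}(\textbf{\textit{v}})}^{\top}$ explicitly. Index its rows by edges $l=(i,j)$ with $i<j$. Index its columns by pairs $(k,m)$, where $k\in\mathcal{T}$ and $m\in\mathcal{N}$, following the block stacking of $\textbf{\textit{A}}(\textbf{\textit{v}})$ over measurements. By definition, the $(m,l)$ entry of $\Tilde{\textbf{\textit{A}}}(\textbf{\textit{V}}^{(k)})=\textbf{\textit{H}}\,\mathrm{diag}(\textbf{\textit{H}}^\top\textbf{\textit{V}}^{(k)})$ is $H_{m,l}(\textbf{\textit{H}}^\top\textbf{\textit{V}}^{(k)})_l$. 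Using \eqref{equation1}, with edge $(i,j)$ leaving $i$ and entering $j$, we have $(\textbf{\textit{H}}^\top\textbf{\textit{V}}^{(k)})_l=V_i^{(k)}-V_j^{(k)}$. Hence row $l$ of ${\textbf{\textit{A}}(\textbf{\textit{v}})}^{\top}$ has exactly two nonzero entries for each $k$: the entry $V_i^{(k)}-V_j^{(k)}$ in column $(k,i)$, and the entry $V_j^{(k)}-V_i^{(k)}$ in column $(k,j)$.

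Next I will write out the rigidity matrix. This generalizes the two-dimensional description given above to $\tau$ dimensions, with columns ordered node-major as $(m,k)$, i.e.\ column $(m-1)\tau+k$. Row $l=(i,j)$ of $\textbf{\textit{R}}(\textbf{\textit{x}})$ has entry $x_i^{(k)}-x_j^{(k)}$ in column $(i,k)$ and entry $x_j^{(k)}-x_i^{(k)}$ in column $(j,k)$, for $k=1,\dotsc,\tau$, and zeros elsewhere. Substituting $x_m^{(k)}=V_m^{(k)}$ shows that the two matrices have identical rows up to the reindexing $(k,m)\mapsto(m,k)$ of columns. That reindexing is a fixed permutation $\textbf{\textit{P}}$, independent of $l$, so ${\textbf{\textit{A}}(\textbf{\textit{v}})}^{\top}=\textbf{\textit{R}}(\textbf{\textit{v}})\textbf{\textit{P}}$, and the rank claim follows. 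The same argument applies with $\overline{\mathcal{E}}$ in place of $\mathcal{E}$, since it only deletes matching rows from both matrices.

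The main obstacle is not the computation but making the identification well posed. Two points need care. First, the voltages are complex while the rigidity matrix is defined for real Euclidean realizations. I would handle this by treating both matrices as polynomial matrices in the coordinates, and noting that the permutation identity holds entrywise over $\mathbb{C}$, so the ranks agree over $\mathbb{C}$. Second, the column-ordering and sign conventions must be fixed consistently, so that the identity holds with a permutation rather than with some more general transformation.
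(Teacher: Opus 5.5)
Your proposal is correct and follows essentially the paper's argument. The paper likewise identifies column $(k-1)n+i$ of $\textbf{\textit{A}}(\textbf{\textit{v}})^\top$ with column $(i-1)\tau+k$ of $\textbf{\textit{R}}(\textbf{\textit{x}})$ and concludes because a column permutation preserves rank; your entry computation via $H_{m,l}(\textbf{\textit{H}}^\top\textbf{\textit{V}}^{(k)})_l$ just makes explicit the verification that the paper states as an observation.
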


\begin{proof}
We observe that by letting $\textbf{\textit{v}}= \textbf{\textit{x}}$ we can rearrange the entries of the ${\textbf{\textit{A}}(\textbf{\textit{v}})}^\top$ matrix so that ${\textbf{\textit{A}}(\textbf{\textit{v}})}^\top$ becomes $\textbf{\textit{R}}(\textbf{\textit{x}})$. Specifically, $\textbf{\textit{A}}_{(k-1)n+i}(\textbf{\textit{v}})^\top = \textbf{\textit{R}}_{(i-1)\tau+k}(\textbf{\textit{x}})$ where $i\in\mathcal{N}, k\in\mathcal{T}$, and $\textbf{\textit{A}}_j(\textbf{\textit{v}})^\top$ and $\textbf{\textit{R}}_j(\textbf{\textit{x}})$ are the $j$-th columns of $\textbf{\textit{A}}(\textbf{\textit{v}})^\top$ and $\textbf{\textit{R}}(\textbf{\textit{x}})$, respectively. Such a permutation does not affect the rank of ${\textbf{\textit{A}}(\textbf{\textit{v}})}^{\top} \in \mathbb{R}^{e \times n\tau}$. Hence, ${\textbf{\textit{A}}(\textbf{\textit{v}})}^{\top}$ has the same rank as the rigidity matrix $\textbf{\textit{R}}(\textbf{\textit{x}})\in \mathbb{R}^{e \times n\tau}$ when $\textbf{\textit{v}} = \textbf{\textit{x}}$.
\end{proof}

Lemma~\ref{l1} implies that both ${\textbf{\textit{A}}(\textbf{\textit{v}})}^\top$ and $\textbf{\textit{R}}(\textbf{\textit{x}})$ have the same rank. Therefore, to investigate the rank of ${\textbf{\textit{A}}(\textbf{\textit{v}})}^\top$ we can equivalently study the rank of $\textbf{\textit{R}}(\textbf{\textit{x}})$ as in Lemma~\ref{l2}. 

\begin{lemma}\label{l2}
     Consider a framework in $\mathbb{R}^\tau$ which corresponds to a complete graph with one edge removed $\overline{\mathcal{G}}$, where $n\geq4$. Assume that the vectors $[\textbf{\textit{x}}_1,\textbf{\textit{x}}_2, \dotsc, \textbf{\textit{x}}_n]$ are generic in the sense of Definition~\ref{d4}. Then the rank of the corresponding rigidity matrix $\textbf{\textit{R}}(\textbf{\textit{x}})$ is equal to $n\tau - \tau(\tau + 1)/2$ when $n\geq \tau$. 
\end{lemma}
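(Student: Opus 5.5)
The plan is to compute the rank of $\textbf{\textit{R}}(\textbf{\textit{x}})$ in two stages: first for the complete graph $\mathcal{G}$, then for $\overline{\mathcal{G}}$. The passage between them uses redundant rigidity (Remark~\ref{r7} and Definition~\ref{d5}). The key fact is that the kernel of the rigidity matrix always contains the trivial infinitesimal motions, and for a rigid generic framework it contains nothing else. By Remark~\ref{r6}, those motions form a space of dimension $\tau + \tau(\tau-1)/2 = \tau(\tau+1)/2$, provided the points affinely span $\mathbb{R}^\tau$. The rank is then $n\tau - \tau(\tau+1)/2$ by rank--nullity on the $n\tau$ columns.

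\emph{Step 1 (trivial motions).} For $\textbf{\textit{z}}=\textbf{\textit{A}}\textbf{\textit{x}}_i+\textbf{\textit{b}}$ with $\textbf{\textit{A}}$ skew-symmetric, I will check directly that every row of $\textbf{\textit{R}}(\textbf{\textit{x}})$ annihilates $\textbf{\textit{z}}$. Indeed, $(\textbf{\textit{x}}_i-\textbf{\textit{x}}_j)^\top \textbf{\textit{A}}(\textbf{\textit{x}}_i-\textbf{\textit{x}}_j)=0$. Genericity (Definition~\ref{d4}) ensures that the points affinely span $\mathbb{R}^\tau$ once $n\ge\tau+1$. In that case the map $(\textbf{\textit{A}},\textbf{\textit{b}})\mapsto \textbf{\textit{z}}$ is injective, so the kernel has dimension at least $\tau(\tau+1)/2$ and the rank is at most $n\tau-\tau(\tau+1)/2$.

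\emph{Step 2 (complete graph attains the bound).} For $K_n$ with affinely spanning points, I will show that any infinitesimal motion is trivial. Using Remark~\ref{r7}, $K_n$ is generically rigid in $\mathbb{R}^\tau$, and at a generic point rigidity coincides with infinitesimal rigidity. A direct argument is also available: fix $\tau+1$ affinely independent points. The conditions $(\textbf{\textit{x}}_i-\textbf{\textit{x}}_j)^\top(\textbf{\textit{z}}_i-\textbf{\textit{z}}_j)=0$ on this simplex force $\textbf{\textit{z}}$ restricted to the simplex to be affine with skew linear part. Every further point is joined to all simplex vertices, so its velocity is then determined uniquely. Hence $\mathrm{rank}\,\textbf{\textit{R}}_{K_n}(\textbf{\textit{x}})=n\tau-\tau(\tau+1)/2$.

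\emph{Step 3 (remove one edge).} Deleting edge $(p,q)$ removes one row, so the rank drops by at most one. It stays the same exactly when that row lies in the span of the others, i.e.\ when $\overline{\mathcal{G}}$ is still infinitesimally rigid. I will prove this by repeating the argument of Step 2 for $\overline{\mathcal{G}}$. Choose the $\tau+1$ affinely independent reference points among the nodes other than $p$, so the simplex is complete in $\overline{\mathcal{G}}$. This requires $n-1\ge\tau+1$. Each remaining node other than $p$ is adjacent to the whole simplex. Node $p$ is adjacent to all simplex vertices except possibly $q$; if $q$ is in the simplex, $p$ is still adjacent to $\tau$ simplex vertices plus at least one extra node (using $n\ge \tau+2$). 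Genericity gives the $\tau+1$ affinely independent neighbours needed to pin down $\textbf{\textit{z}}_p$. This is precisely the redundant rigidity of Definition~\ref{d5} made explicit.

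The main obstacle I expect is the range of $n$ relative to $\tau$. The argument above needs $n\ge\tau+2$, and for $n=\tau+1$ the graph $K_{\tau+1}$ minus an edge genuinely flexes, losing one in rank. In the boundary cases $n\le\tau+1$ the affine span is deficient, so the trivial-motion count must be replaced by $n\tau-\binom{n}{2}$-type formulas. I would therefore first prove the lemma under $n\ge\tau+2$, which covers its use in Theorem~\ref{t3} where $\tau=n-2$. I would then treat the smaller cases separately by direct row counting rather than through the formula as stated.
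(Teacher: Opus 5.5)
Restricting to $n\ge\tau+2$ is correct and necessary, and your simplex-and-pinning route is a more elementary replacement for the paper's citation of rigidity theory. However, Step~3 fails at exactly $n=\tau+2$, the case Theorem~\ref{t3} uses. The fix is easy.

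On the range: the statement as written is false for $n\in\{\tau,\tau+1\}$. There $n\tau-\tau(\tau+1)/2=\binom{n}{2}$, which exceeds the $\binom{n}{2}-1$ rows of $\textbf{\textit{R}}(\textbf{\textit{x}})$ (e.g.\ $n=4$, $\tau=3$). The paper's proof misses this because its appeal to redundant rigidity of complete graphs (Remark~\ref{r7}) holds in $\mathbb{R}^\tau$ only for $n\ge\tau+2$. Your explicit argument is what exposes the correct range. One small inaccuracy: at $n=\tau+1$ generic points do affinely span $\mathbb{R}^\tau$, so what fails there is the edge count, not the affine span.

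The gap is in Step~3 at $n=\tau+2$, which is the case $\tau=n-2$ needed in Theorem~\ref{t3}. The nodes outside your simplex other than $p$ number $n-\tau-2$, so at $n=\tau+2$ there is no ``extra node''. Then $q$ is forced into the simplex, and $p$ has exactly $\tau$ neighbours, not $\tau+1$. Your claim of an extra neighbour actually needs $n\ge\tau+3$. Re-choosing the simplex does not help: at $n=\tau+2$ the omitted node must be $p$ or $q$, and each has degree $\tau$. With your stated requirement of $\tau+1$ affinely independent neighbours, the argument does not show that $K_{\tau+2}$ minus an edge is infinitesimally rigid.

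The missing observation is that $\tau$ neighbours suffice. After subtracting the trivial motion determined on the simplex, $\textbf{\textit{z}}_p$ satisfies $(\textbf{\textit{x}}_p-\textbf{\textit{x}}_j)^\top\textbf{\textit{z}}_p=0$ for the $\tau$ neighbours $j$. By genericity, $\textbf{\textit{x}}_p$ is not in the affine hull of those $\tau$ points. Hence the $\tau$ vectors $\textbf{\textit{x}}_p-\textbf{\textit{x}}_j$ form a basis of $\mathbb{R}^\tau$ and force $\textbf{\textit{z}}_p=0$ (a Henneberg $0$-extension of the simplex). This agrees with the count $\binom{\tau+2}{2}-1=\tau(\tau+3)/2=(\tau+2)\tau-\tau(\tau+1)/2$, so $K_{\tau+2}$ minus an edge is minimally rigid. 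With this replacement, your Steps~1--3 prove the lemma for all $n\ge\tau+2$.
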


\begin{proof}
According to \cite[Theorem 2.2]{Conditions}, all infinitesimal motions must be in the null space of $\textbf{\textit{R}}(\textbf{\textit{x}})$. Since $\mathcal{G}$ corresponds to a complete graph, which is redundantly rigid (see Remark~\ref{r7}), $\overline{\mathcal{G}}$ (which has one edge less than $\mathcal{G}$) is also rigid for $n\geq4$. Thus, the null space of $\textbf{\textit{R}}(\textbf{\textit{x}})$ contains only the trivial infinitesimal motions which are translations and rotations (see Remark \ref{r6}). Thus, $\mathrm{rank}(\textbf{\textit{R}}(\textbf{\textit{x}}))$ is the number of columns of the rigidity matrix minus the number of trivial infinitesimal motions. That is, $\mathrm{rank}(\textbf{\textit{R}}(\textbf{\textit{x}})) = n\tau - \tau(\tau + 1)/2$ when $n\geq \tau$.
\end{proof}

\begin{remark}\label{r10}
We can define the concept of rigid frameworks in the complex space the same way as in the real space. Here, a realization of a graph $\overline{\mathcal{G}}$ is a mapping $\textbf{\textit{x}}$ from the nodes of \emph{$\overline{\mathcal{G}}$} to points in the complex space. Accordingly, the rigidity matrix can also be defined in the complex space, that is, $\textbf{\textit{R}}(\textbf{\textit{x}})\in \mathbb{C}^{\bar{e} \times n\tau}$. See our prior work~\cite{11107735} for more details.
\end{remark}

Next, we provide the proof of the rank of $\textbf{\textit{A}}(\textbf{\textit{v}})$ in the complex space $\mathbb{C}$ for a complete graph with one edge removed, $\overline{\mathcal{G}}$, extending our prior work in \cite{11107735}.
\begin{lemma}\label{l3}
     For the general a.c. case, suppose the graph $\overline{\mathcal{G}}$ is generically globally rigid, where $\textbf{\textit{A}}(\textbf{\textit{v}}) \in \mathbb{C}^{n\tau \times \bar{e}}$. Then, the rank of $\textbf{\textit{A}}(\textbf{\textit{v}})$ is equal to $n\tau - \tau(\tau + 1)/2$ when $n\geq \tau$.  
\end{lemma}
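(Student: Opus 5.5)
Throughout I assume $\tau\le n-1$ (the paper already restricts to $\tau\le n$). The idea is to reduce the complex statement to the real rigidity statement of Lemma~\ref{l2} through Lemma~\ref{l1}, and to replace the real argument by a purely algebraic one. By Lemma~\ref{l1}, which is only a permutation of entries and so holds verbatim over $\mathbb{C}$ (Remark~\ref{r10}), $\mathrm{rank}\,\textbf{\textit{A}}(\textbf{\textit{v}})=\mathrm{rank}\,\textbf{\textit{R}}(\textbf{\textit{x}})$ with $\textbf{\textit{x}}=\textbf{\textit{v}}\in\mathbb{C}^{n\tau}$. So it suffices to show that the complex rigidity matrix of $\overline{\mathcal{G}}$ at a generic complex configuration has rank $n\tau-\tau(\tau+1)/2$.

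The key observation is that the entries of $\textbf{\textit{R}}(\textbf{\textit{x}})$ are polynomials with integer coefficients in the coordinates of $\textbf{\textit{x}}$. Hence $r_{\max}:=\max_{\textbf{\textit{x}}\in\mathbb{C}^{n\tau}}\mathrm{rank}\,\textbf{\textit{R}}(\textbf{\textit{x}})$ is attained exactly off the common zero set of all $r_{\max}\times r_{\max}$ minors. Those minors form a finite family of rational polynomials, not all identically zero. A generic point in the sense of Definition~\ref{d3} satisfies no nonzero rational polynomial equation, so at a generic $\textbf{\textit{v}}$ the rank equals $r_{\max}$.

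It remains to compute $r_{\max}$ over $\mathbb{C}$, and I will do this by bounding it from both sides.

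\emph{Lower bound.} $\mathbb{R}^{n\tau}$ is Zariski dense in $\mathbb{C}^{n\tau}$, so a minor that vanishes on all real configurations vanishes identically. Therefore $r_{\max}$ equals the generic real rank. By Lemma~\ref{l2}, for $n\ge4$ and $n\ge\tau$, that real rank is $n\tau-\tau(\tau+1)/2$.

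\emph{Upper bound.} I exhibit $\tau(\tau+1)/2$ independent vectors in the kernel of $\textbf{\textit{R}}(\textbf{\textit{x}})$ for every complex $\textbf{\textit{x}}$:
\begin{itemize}
\item the $\tau$ translations $\dot{\textbf{\textit{x}}}_i=\textbf{\textit{t}}$;
\item the $\tau(\tau-1)/2$ ``rotations'' $\dot{\textbf{\textit{x}}}_i=\textbf{\textit{S}}\textbf{\textit{x}}_i$ with $\textbf{\textit{S}}$ complex skew-symmetric.
\end{itemize}
Each edge row computes $(\textbf{\textit{x}}_i-\textbf{\textit{x}}_j)^\top(\dot{\textbf{\textit{x}}}_i-\dot{\textbf{\textit{x}}}_j)$. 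This vanishes for translations, and for rotations it equals $\textbf{\textit{d}}^\top\textbf{\textit{S}}\textbf{\textit{d}}=0$. Note that the bilinear (not Hermitian) form is used, which matches $\textbf{\textit{A}}(\textbf{\textit{v}})$ being built from $\textbf{\textit{H}}^\top\textbf{\textit{V}}$ without conjugation.

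These motions are linearly independent whenever the affine span of $\textbf{\textit{x}}_1,\dots,\textbf{\textit{x}}_n$ is all of $\mathbb{C}^\tau$. That holds generically because $n\ge\tau+1$. Suppose $\textbf{\textit{t}}+\textbf{\textit{S}}\textbf{\textit{x}}_i=0$ for all $i$. Subtracting equations gives $\textbf{\textit{S}}(\textbf{\textit{x}}_i-\textbf{\textit{x}}_1)=0$ for all $i$, and these differences span $\mathbb{C}^\tau$, so $\textbf{\textit{S}}=0$ and then $\textbf{\textit{t}}=0$. This gives $r_{\max}\le n\tau-\tau(\tau+1)/2$, and the two bounds coincide.

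The main obstacle is justifying Lemma~\ref{l2}'s real rank count itself, rather than the transfer to $\mathbb{C}$. I would first try to derive it from the hypothesis of generic global rigidity (via redundant rigidity of $K_n$ and Hendrickson's conditions) as Lemma~\ref{l2} does. As a backup I would verify it directly: the generic rank of a graph's rigidity matrix cannot drop by more than one when a single edge is deleted. For $K_n$ with $n\ge\tau+2$, every edge lies in a rigid $K_{\tau+2}$ subgraph, so the edge is redundant and the rank is preserved. The boundary case $n=\tau+1$ needs separate care: $K_{\tau+1}$ is a simplex and minimally rigid, so deleting an edge would lower the rank. Hence the statement must be read, as in Theorem~\ref{t3}, for $\tau\le n-2$, and I would state that restriction explicitly.
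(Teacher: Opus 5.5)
Your argument is correct on the range where the statement actually holds, and it takes a more elementary route than the paper. The paper asserts that $\overline{\mathcal{G}}$ is generically globally rigid in $\mathbb{R}^\tau$ and transfers global rigidity to $\mathbb{C}^\tau$ (Gortler--Thurston). It then cites two facts: a generic complex framework attains the rank of the rigidity matroid, and the complex and real matroid ranks coincide. You prove those two facts directly. Integer-coefficient minors together with Definition~\ref{d3} give attainment at generic points, and Zariski density of $\mathbb{R}^{n\tau}$ gives real $=$ complex. The global-rigidity detour then disappears, and only infinitesimal rigidity is used.

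This buys more than economy. $K_{\tau+2}$ minus an edge is only $\tau$-connected, so by Hendrickson it is not generically globally rigid in $\mathbb{R}^\tau$. The paper's hypothesis therefore fails exactly at $\tau=n-2$, which is the case Theorem~\ref{t3} needs. For the same reason, your first plan (deriving the count from the global-rigidity hypothesis) would only reach $\tau\le n-3$, so your backup redundancy argument is the one to use.

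Your boundary restriction is necessary, and it is needed at $\tau=n$ as well as at $\tau=n-1$. For both values,
\[
n\tau-\tau(\tau+1)/2=n(n-1)/2=e>\bar e ,
\]
which exceeds the number of columns of $\textbf{\textit{A}}(\textbf{\textit{v}})$. So the lemma, and Lemma~\ref{l2} as you quote it, is true only for $\tau\le n-2$. State this explicitly for $\tau=n$ instead of dropping that case silently in your first sentence.

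Two minor points. First, once Zariski density gives $r_{\max}$ equal to the real generic rank, your trivial-motion upper bound is redundant. It is, however, exactly the half of the real count you need if you verify Lemma~\ref{l2} yourself. Second, in the backup argument, lying in a rigid $K_{\tau+2}$ is not what makes an edge redundant: every edge of the rigid $K_{\tau+1}$ is non-redundant. The real reason is that $K_{\tau+2}$ has $\binom{\tau+2}{2}$ edges but generic rank $\binom{\tau+2}{2}-1$. Its edge set is therefore dependent, and by edge-transitivity every edge lies in a circuit.
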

\begin{proof} Following from Remark~\ref{r7} and Definition~\ref{d5}, a complete graph with one edge removed, $\overline{\mathcal{G}}$, is generically globally rigid in $\mathbb{R}^\tau$. According to \cite[Theorem 1]{gortler2014generic} and \cite[Theorem 63.5.20]{jordan2017global}, if a graph is generically globally rigid in $\mathbb{R}^\tau$, it is also generically globally rigid in $\mathbb{C}^\tau$. Thus, $\overline{\mathcal{G}}$ also corresponds to a generically global rigid framework in $\mathbb{C}^\tau$. According to a discussion in \cite[Lemma 3.9]{Globally}, if a framework $(\overline{\mathcal{G}},\textbf{\textit{x}})$ is a generic framework in $\mathbb{C}^\tau$, then its rigidity matrix has rank equal to the rigidity matroid of $\overline{\mathcal{G}}$, which is the maximum rank of the rigidity matrix of any framework in $\mathbb{C}^\tau$. Furthermore, according to \cite[Section 2.2]{Globally}, the rank of a rigidity matroid of $\overline{\mathcal{G}}$ in $\mathbb{C}^\tau$ is the same as the rank of a rigidity matroid of $\overline{\mathcal{G}}$ in $\mathbb{R}^\tau$ which is equal to $n\tau - \tau(\tau + 1)/2$ when $n\geq \tau$. Therefore, even in the complex case, the rank of $\textbf{\textit{A}}(\textbf{\textit{v}})$ is equal to the rank of the corresponding rigidity matrix $\textbf{\textit{R}}(\textbf{\textit{x}})$ and is equal to $n\tau - \tau(\tau + 1)/2$ when $n\geq \tau$. 
\end{proof}

Our prior work in \cite{11107735} showed that the rank of $\textbf{\textit{A}}(\textbf{\textit{v}}) \in \mathbb{C}^{n\tau \times e}$ for a complete graph ${\mathcal{G}}$ is equal to $n\tau-\tau(\tau+1)/2$ when $n\geq \tau$. Lemma~\ref{l3} extends this prior work to a complete graph with one edge removed, $\overline{\mathcal{G}}$.

\begin{proof}[Proof of Theorem~\ref{t3}]
    The electric grid topology
and admittance parameters $\textbf{\textit{y}}$ can be uniquely determined when equation~\eqref{eq10} has a unique solution. From Rouché-Capelli Theorem in  \cite[Chapter 2]{Linear}, the overdetermined system of equation~\eqref{eq10} has a unique solution $\textbf{\textit{y}}$ if and only if $\mathrm{rank}(\textbf{\textit{A}}(\textbf{\textit{v}}))$ equals the number of elements in $\textbf{\textit{y}}$, which is the number of edges $e-1$. Accordingly,~\eqref{eq10} will have a unique solution if and only if $\mathrm{rank}(\textbf{\textit{A}}(\textbf{\textit{v}})) =  \frac{n(n-1)}{2}-1$. From Lemma~\ref{l3}, we obtain $\mathrm{rank}({\textbf{\textit{A}}(\textbf{\textit{v}})}^{\top}) = \mathrm{rank}(\textbf{\textit{R}}(\textbf{\textit{x}}))= n\tau - \frac{\tau(\tau+1)}{2}$ when $n\geq \tau$. Thus,~\eqref{eq10} will have a unique solution if and only if \begin{equation}\label{eq12} n\tau - \frac{\tau(\tau+1)}{2} = \frac{n(n-1)}{2}-1.
\end{equation}
The minimum value of $\tau$ such that~\eqref{eq12} is satisfied is $\tau = n-2$. Also, this value of $\tau$ satisfies $n \geq \tau$. Therefore, the minimum number of measurements $\tau$ such that the electric grid topology and admittance parameters can be uniquely determined occurs if and only if $\tau = n-2$, when $n\geq4$.
\end{proof}



 \bibliographystyle{elsarticle-num} 
 \bibliography{ref.bib}






\end{document}